\newtheorem{theorem}{Theorem}
\newtheorem{assumption}{Assumption}
\newtheorem{condition}{Condition}
\newtheorem{corollary}[theorem]{Corollary}
\newtheorem{definition}{Definition}
\newtheorem{lemma}{Lemma}
\newtheorem{proposition}[theorem]{Proposition}
{\theorembodyfont{\upshape} 

}
\newenvironment{proof}[1][Proof]{\textbf{#1.} }{\ \rule{0.5em}{0.5em}}
\numberwithin{theorem}{section}
\numberwithin{assumption}{section}
\numberwithin{definition}{section}
\numberwithin{example}{section}
\numberwithin{lemma}{section}
\numberwithin{remark}{section}
\numberwithin{algorithm}{section}
\numberwithin{equation}{section}
\renewcommand{\cite}{\citet}
\numberwithin{Assumption}{section}
\begin{document}

\title{Quantile Regression with Interval Data}
\author{Arie Beresteanu\thanks{%
Department of Economics, University of Pittsburgh, arie@pitt.edu. } \and
Yuya Sasaki\thanks{%
Department of Economics, Vanderbilt University, yuya.sasaki@vanderbilt.edu.}}
\date{\today\thanks{%
We have benefited from discussions with Tong Li, Francesca Molinari, Ilya Molchanov, Tatsushi Oka, Roee Teper, and Adam Rosen and very useful comments by Esfandiar Maasoumi (the editor), the editors of this special issue in honor of Cheng Hsiao, an anonymous referee, and participants in NASM 2016, New York Camp Econometrics XIII, CMES 2018, and IAAE 2018. All remaining errors are ours. This research was sponsored
in part by NSF grant SES-0922373.}}
\maketitle

\begin{abstract}\setlength{\baselineskip}{5.5mm}
This paper investigates the identification of quantiles and quantile regression parameters when observations are set valued.
We define the identification set of quantiles of random sets in a way that extends the definition of quantiles for regular random variables.
We then give sharp characterization of this set by extending concepts from random set theory.
Applying the identification set of quantiles and its sharpness to parametric quantile regression models yields the identification set of the parameters and its sharpness.
We apply our methods to data on localized environmental benefits and their impact on house values.
\bigskip

\noindent \textbf{Keywords}: Partial Identification, Random Sets, Quantile Regression, Quantile Sets.

\noindent \textbf{JEL Code}: C21.

\bigskip
\end{abstract}

\newpage

\section{Introduction}

Interval valued observations are common in data based on surveys. 
One type of interval valued data is generated by a response to a questioner that offers a distinct set of intervals to choose from. 
Another type of interval data is generated by respondents choosing the minimum and/or maximum amount.
Willingness to pay surveys often fall in this type. 
%Several papers investigate the impact of this kind of imprecise data on the identification of model parameters under various assumptions. 
Simple econometric tools, such as to apply the OLS using the midpoint of willingness-to-pay interval as a dependent variable, have long been known to suffer from substantial biases -- see \cite{CameronHuppert88}.

Set identification and set inference approaches are proposed to solve this issue under a number of econometric contexts -- see our discussion of the literature ahead near the end of this section.
To the best of our knowledge, however, no preceding paper in this literature has discussed quantile regression models with general quantile ranks and general set valued data.
Some empirical papers \citep[e.g.,][]{OGarraMourato2007,GamperTimmins13}, on the other hand, use quantiles and quantile regressions where outcome data are interval-valued by taking the midpoint of the interval as the representative value.
In this light, this paper investigates the identification of quantiles and of quantile regressions when the outcome variable is set valued.

We first identify the unconditional and conditional quantiles of a random set in Section \ref{sec:identification}. 
The concept of the quantile set of a random set is introduced. 
The quantile set is shown to be identified by the containment and capacity quantiles which we define in Section \ref{sec:identification}.
The identification argument for unconditional quantiles extends to that for conditional quantiles in Section \ref{sec:nonparametric_conditional_quantile}.
Further, the identification argument for conditional quantiles is extended to set identification of quantile regression functions depending on a finite number of parameters in Section \ref{sec:quantile_regression}.
We show that this identification set is defined by a system of conditional moment inequalities.
These inequalities involve the cumulative containment and cumulative capacity functionals, which characterize the identification set for the quantile regression parameters.
We also show that the sharp identification set is convex if the quantile regression is linear in parameters.
We use simulation studies to demonstrate the validity of the sharp characterization of the identified sets in Section \ref{sec:simulation_studies}.
%Applying our method to empirical data that contains interval valued dependent variables, we analyze the quantile regression of house values on land contamination by hazardous waste in Section \ref{sec:empirical}.

%Generally, computing the identification set for quantile regression parameters can be computationally expensive without a prior knowledge of an approximate region in which these parameters should be.
%We next suggest a solution for linear quantile regression.
%Estimation of parameters for linear quantile regressions can be written as minimization of check loss function (see, \cite{KoenkerBasset78} and \cite{Koenker05}).
%The solution to this minimization problem can be characterized as the solution to a linear programming (LP) problem.
%The set estimate is thus characterized as the set of solutions to a collection of LP problems. 
%In Section \ref{sec:best_linear}, we propose a feasible computational algorithm to produce this set of infinite LP solutions by solving a finite number of LP problems.

\textbf{Literature:}
This paper is related to two broad literatures.
One is the literature on partial identification with interval data and the other is the literature on quantile regressions.
For the former branch of the literature, binary choice models with interval regressors are discussed in \cite{ManskiTamer02}. 
Mean regressions when the outcome variables are interval valued are discussed in \cite{BeresteanuMolinari08}. 
The models of \cite{ManskiTamer02} and \cite{BeresteanuMolinari08} are generalized in \cite{BeresteanuMolchanovMolinari11}. 

Quantile regressions are introduced and studied extensively in the literature \citep[e.g.,][]{KoenkerBasset78,Koenker05}.
A number of notable papers in the literature discuss identification of quantiles and/or quantile regressions under set-valued observed outcomes.
One of the most common causes of set-valued outcomes is censoring.
\citet{Powell1984} provides an estimator for the linear median regression model where the outcome variable is censored.
\citet{Manski1985} discusses identification of the linear median regression model where econometricians only observe the sign of an outcome variable.
\citet{HongTamer2003} discuss inference on the linear median regression model where the outcome variable is censored and regressors are endogenous.
Khan and Tamer (2009) discuss inference on the linear median regression model where the outcome variable is endogenously censored.
%\citet{WangWang2009} propose a local weighting approach to censored quantile regression.
The model considered in this paper includes the case of censored outcome variables as special cases of set-valued outcome variables. 
%A bottom-coded outcome $y = c$ can be treated as the set outcome $Y = (-\infty,c]$.
Furthermore, compared with these preceding papers, we consider generalized quantile ranks $\tau \in (0,1)$ in addition to the median $\tau = 0.5$.

More recently, \cite{LiOka2015} consider linear quantile regressions with a censored outcome variable in the framework of \cite{Rosen2012}.
We do not deal with panel data, but the model considered in this paper includes the case of censored outcome variables as argued above.
While the source of partial identification is not interval-valued outcomes, partial identification of nonseparable models are also investigated by \citet{Chesher2005,Chesher2010} and generalized by \citet{ChesherRosen2015}.
%The literature on multivariate quantiles and quantile regression \citep{HallinPaindaveineSiman2010,CarlierChernozhukovGalichon2016,ChernozhukovGalichonHallinHenry2017} also treats quantiles to be set-valued.
%While they consider set-valued quantiles for multi-dimensionality, they assume to observe singleton-valued outcomes in data.
%This setting is different from that of ours where the outcome is univariate and observed outcomes are allowed to be set-valued.

%Finally, we remark that the following list of inference methods are on good terms with our set identification results, though our main focus is on the identification. First the set inference approach of \citet{BeresteanuMolinari08} can be generally applied to our identified sets. For conditional quantiles (Section \ref{sec:nonparametric_conditional_quantile}), the inference approaches of \citet{Chaudhuri1991,guerre_sabbah2012,qu_yoon2015,qu_yoon2018} in conjunction with \citet{BeresteanuMolinari08} are applicable. For parametric quantile regression models where the sharp identification is characterized by conditional moment inequalities (Section \ref{sec:quantile_regression}), the inference approach of \citet{AndrewsShi2013} is applicable - a practical guideline of the infernece procedure is provided in Appendix \ref{sec:procedure_inference}.

\textbf{Notations and definitions:}
We introduce basic notations and definitions partly following those of \cite{Molchanov05}.
Let $\left( \Omega ,\Im ,\mathbf{P}\right) $ be a complete probability space on which all random variables and sets are defined. 
Let $\mathcal{K}(\mathbb{R})$ denote the collection of all closed sets in $\mathbb{R}$.
For an $\mathbb{R}$-valued random variable $y$, let $F_{y}\left( t\right) =\mathbf{P}\left(\{\omega: y(\omega) \in \left( -\infty ,t\right] \}\right) $ define the cumulative distribution function $F_y$ of $y$. 
When it exists, the probability density function is denoted by $f_y$.
For $\tau \in \left( 0,1\right) $, let $q_{y}\left( \tau \right) = \inf \left\{ t: F_{y}\left( t\right) \geq \tau \right\} $ denote the $\tau$-th quantile of $y$. 
A random variable $y: \Omega \rightarrow \mathbb{R}$ is a measurable selection of $Y: \Omega \rightarrow \mathcal{K}(\mathbb{R})$ if $y(\omega) \in Y(\omega)$ $\mathbf{P}$-a.s. 
The set of selections of $Y$ is denoted by $Sel(Y)$.
The containment functional $C_Y$ and capacity functional $T_Y$ of $Y$ are defined by $C_Y(K) = \mathbf{P}(\left\{ \omega : Y(\omega) \subset K\right\})$ and $T_Y(K) = \mathbf{P}(\left\{ \omega : Y(\omega) \cap K \neq \emptyset\right\})$, respectively.

%For two sets, $A$ and $B$, in a finite dimensional Euclidean space $\left(  \mathbb{R}^{k},\left\Vert {}\right\Vert \right) $, the directed Hausdorff distance from $A$ to $B$ is 
%\begin{equation*}
%d_{H}\left( A,B\right) =\sup_{a\in A}\inf_{b\in B}\left\Vert a-b\right\Vert
%\end{equation*}%
%and the Hausdorff distance between $A$ and $B$ is 
%\begin{equation*}
%H\left( A,B\right) =\max \left\{ d_{H}\left( A,B\right) ,d_{H}\left(
%B,A\right) \right\} .
%\end{equation*}

\section{Quantiles of Random Sets}\label{sec:identification}

We start by discussing identification of quantiles. 
In the current section, we focus on unconditional quantiles.
Section \ref{sec:covariates} extends this baseline result to conditional quantiles by considering the corresponding conditional probabilities given covariates, where the responses are allowed to arbitrarily depend on the covariates.

\begin{assumption}[Data]\label{a:unconditional}
Let $(y^\ast,Y): \Omega \rightarrow \mathbb{R} \times \mathcal{K}(\mathbb{R})$ be such that 
$y^{\ast }$ is unobserved,
$Y$ is observed,
$Y(\omega)$ is non-empty $\mathbf{P}$-a.s., and
$y^{\ast }(\omega) \in Y(\omega),$ $\mathbf{P}$-a.s.
\end{assumption}

At this point, we do \textit{not} assume that $Y(\omega)$ is interval-valued.
We later show that sharp identification requires $Y(\omega)$ to be interval-valued -- see Theorem \ref{prop:id_convex} ahead.

We would like to learn about $q_{y^{\ast }}\left( \tau \right) $. 
Define the $\tau$-th quantile set of $Y$ by
\begin{equation*}
\Theta^Y_0\left( \tau \right) =\left\{ q_{y}\left( \tau \right) :y\in Sel\left( Y\right) \right\} .
\end{equation*}
This is, by definition, the identification set for $q_{y^{\ast }}\left( \tau \right) $.
In other words, with no further information the only thing we can say about $q_{y^{\ast }}\left( \tau \right) $ is that $q_{y^{\ast }}\left( \tau \right) \in \Theta^Y_0\left( \tau \right) $.

For any $t\in \mathbb{R}$, define 
\begin{eqnarray*}
\tilde{C}_{Y}\left( t\right) &=& C_{Y}\left( (-\infty ,t]\right) \ \text{and} \\
\tilde{T}_{Y}\left( t\right) &=& T_{Y}\left( (-\infty ,t]\right)
\end{eqnarray*}
to be
the cumulative containment and cumulative capacity functionals, respectively. 
Note that $\tilde{C}_{Y}$ and $\tilde{T}_{Y}$ are monotone increasing and right continuous. 
For $\tau \in \left( 0,1\right) $, define 
\begin{eqnarray*}
\tilde{C}_{Y}^{-1}\left( \tau \right) &=&\inf \left\{ t:\tilde{C}_{Y}\left(t\right) \geq \tau \right\} \ \text{and} \\
\tilde{T}_{Y}^{-1}\left( \tau \right) &=&\inf \left\{ t:\tilde{T}_{Y}\left(t\right) \geq \tau \right\}
\end{eqnarray*}
to be the containment and capacity quantiles of $Y$, respectively.
Since $Y$ is $\mathcal{K}(\mathbb{R})$-valued, $\tilde{C}_{Y}^{-1}$ and $\tilde{T}_{Y}^{-1}$ are equivalent to $q_{\sup Y}$ and $q_{\inf Y}$, respectively.

\begin{theorem}[Partial Identification]\label{prop:id}
Suppose that Assumption \ref{a:unconditional} holds.
For every $\tau \in \left( 0,1\right) $, $\Theta^Y_0\left( \tau \right) \subset \left[ q_{\inf Y}\left( \tau \right) ,q_{\sup Y}\left( \tau \right) \right]$.
\end{theorem}

\begin{proof}
The statement trivially holds if $\Theta^Y_0(\tau)$ is an empty set. 
Suppose that $\Theta^Y_0(\tau)$ is not empty.
Let $t^\ast \in \Theta^Y_0(\tau) = \left\{ q_y(\tau) : y \in Sel(Y) \right\}$.
Thus, there exists $y \in Sel(Y)$ such that $t^\ast = q_y(\tau)$.
By the definitions of $\tilde C_Y$ and $\tilde T_Y$, $y\in Sel\left( Y\right) $ implies $\tilde C_{Y}\left(t\right) \leq F_{y}\left( t\right) \leq \tilde T_{Y}\left( t \right) $ for all $t \in \mathbb{R}$.\footnote{This implication corresponds to the necessity part of Artstein's Lemma, which does not restrict to compact sets. See discussion in \cite{BeresteanuMolchanovMolinari2012}.}
Therefore, $\inf \left\{ t:\tilde{T}_{Y}\left( t\right) \geq \tau \right\} \leq \inf \left\{ t:F_{y}\left(t\right) \geq \tau \right\} \leq \inf \left\{ t:\tilde{C}_{Y}\left( t\right)\geq \tau \right\} $.
This proves the theorem.
\end{proof}

When $Y=\left\{ y \right\} $ for a random variable $y $, $\Theta^Y_0\left( \tau \right) =\left\{ q_{y }\left( \tau \right) \right\} $. 
\cite{Molchanov1990} defines a quantile of a random set in a different way.
His definition does not provide the same generalization of a quantile function in case of a $\mathbb{R}$-valued random set that we need in this paper. 

The other direction of set inclusion for Theorem \ref{prop:id} need not hold.
To see this, consider a simple random set $Y$ such that $Y(\omega) = [-2,-1] \cup [1,2]$ $\mathbf{P}$-a.s.
Then, we have $0 \in [-2,2] = \left[ q_{\inf Y}\left( 0.5 \right) ,q_{\sup Y}\left( 0.5 \right) \right]$, but $0 \not\in \Theta^Y_0(0.5) \subset [-2,-1] \cup [1,2]$.
This example illustrates why a `hole' in the set $Y(\omega)$ fails to establish the other direction of set inclusion.
This observation in fact can be generalized.
The following theorem shows that the identification set equality holds without such holes (i.e., for interval-valued random sets).

\begin{theorem}[Sharpness]\label{prop:id_convex}
Suppose that Assumption \ref{a:unconditional} holds.
If $Y$ is a convex valued random set in $\mathcal{K}(\mathbb{R})$ with $Sel(Y) \neq \emptyset$, then $\left(q_{\inf Y}\left( \tau \right) ,q_{\sup Y}\left( \tau \right) \right) \subset \Theta^Y_0\left( \tau \right)$ for all $\tau \in \left( 0,1\right)$.
Furthermore, if in addition $\inf Y(\omega) \in Y(\omega) > -\infty$ $\mathbf{P}$-a.s., then $q_{\inf Y}\left( \tau \right) \in \Theta^Y_0\left( \tau \right)$ for all $\tau \in \left( 0,1\right)$.
Similarly, if in addition $\sup Y(\omega) < \infty$  $\mathbf{P}$-a.s., then $q_{\sup Y}\left( \tau \right) \in \Theta^Y_0\left( \tau \right)$ for all $\tau \in \left( 0,1\right)$.
\end{theorem}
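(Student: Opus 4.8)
The plan is to prove each of the three inclusions by exhibiting, for the target point, an explicit measurable selection of $Y$ whose $\tau$-th quantile equals that point. Since $Y$ is closed and convex valued in $\mathbb{R}$, each realization is a (possibly unbounded) interval, so I would write $Y(\omega) = [a(\omega), b(\omega)]$ with $a(\omega) = \inf Y(\omega)$ and $b(\omega) = \sup Y(\omega)$; both are measurable because $Y$ is a random closed set. The first thing I would record is the dictionary between the cumulative functionals and the endpoint distributions: since $Y(\omega)\cap(-\infty,t]\neq\emptyset$ iff $a(\omega)\le t$ and $Y(\omega)\subset(-\infty,t]$ iff $b(\omega)\le t$, we have $\tilde{T}_Y(t) = \mathbf{P}(a\le t)$ and $\tilde{C}_Y(t) = \mathbf{P}(b\le t)$. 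This is exactly where convexity is used, and the earlier `hole' example $[-2,-1]\cup[1,2]$ shows it cannot be dropped.

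For the open interval, fix $c \in \left(\tilde{T}_Y^{-1}(\tau), \tilde{C}_Y^{-1}(\tau)\right)$ and consider the clamped (projection) selection
\[
y(\omega) = \mathrm{med}\{a(\omega), c, b(\omega)\} = \max\{a(\omega), \min\{b(\omega), c\}\}.
\]
For every finite $c$ this is a genuine, finite, measurable selection regardless of whether $a$ or $b$ is infinite, which is precisely why no boundedness hypothesis is needed here. I would then compute its distribution function by checking $\{y \le t\} = \{b \le t\}$ for $t<c$ and $\{y \le c\} = \{a \le c\}$, so that $F_y(t) = \tilde{C}_Y(t)$ for $t<c$ and $F_y(c) = \tilde{T}_Y(c)$. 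Feeding in the two defining inequalities for $c$ — namely $\tilde{C}_Y(t) \le \tilde{C}_Y(c) < \tau$ for all $t \le c$ (because $c < \tilde{C}_Y^{-1}(\tau)$) and $\tilde{T}_Y(c) \ge \tau$ (because $c > \tilde{T}_Y^{-1}(\tau)$ and $\tilde{T}_Y$ is monotone and right continuous) — yields $F_y(t) < \tau$ for $t<c$ while $F_y(c) \ge \tau$, hence $q_y(\tau) = c$, so $c\in\Theta^Y_0(\tau)$.

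The two endpoint claims follow from the extreme selections. Under $a = \inf Y > -\infty$ a.s., the lower-envelope selection $y = a$ is finite a.s. and has $F_y = \tilde{T}_Y$, so $q_y(\tau) = \tilde{T}_Y^{-1}(\tau)$; symmetrically, under $\sup Y < \infty$ a.s. the upper-envelope selection $y = b$ has $F_y = \tilde{C}_Y$ and $q_y(\tau) = \tilde{C}_Y^{-1}(\tau)$. In each case the corresponding finiteness condition forces $\tilde{T}_Y(t)\to 0$ as $t\to-\infty$ (resp.\ $\tilde{C}_Y(t)\to 1$ as $t\to\infty$), so the relevant quantile is a real number for every $\tau\in(0,1)$ and the selection actually attains it.

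The step I expect to demand the most care is the quantile identity $q_y(\tau) = c$ for the clamp, and specifically the strict bound $F_y(t) < \tau$ for every $t<c$: this is what prevents the quantile from slipping below $c$, and it is exactly where the strictness of the open interval (the hypothesis $c < \tilde{C}_Y^{-1}(\tau)$) enters. The same point explains why the endpoints are genuinely borderline and must be handled separately under extra finiteness assumptions — at an endpoint one of the two inequalities degenerates to an equality and the clamp no longer pins the quantile down, whereas the extreme selections $y=a$ and $y=b$ do, provided they are finite a.s. The remaining obligations are routine: measurability of $a$, $b$, and of the clamp, and verification of the endpoint-versus-functional dictionary recorded at the outset.
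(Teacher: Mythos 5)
Your proof is correct, and its core idea coincides with the paper's: exhibit an explicit selection whose $\tau$-th quantile equals the target point, using convexity to guarantee that the interior point $c$ lies in $Y(\omega)$ on the ``middle'' event $\{\inf Y \leq c \leq \sup Y\}$, and using closedness plus the extreme selections $\inf Y$ and $\sup Y$ for the two endpoint claims (the endpoint arguments are identical to the paper's). The one place you genuinely diverge is the tail construction for interior points: the paper takes an arbitrary $y \in Sel(Y)$ (this is where its hypothesis $Sel(Y)\neq\emptyset$ enters) and overwrites it with the constant $c$ on the middle event, then only \emph{bounds} the resulting CDF via $\mathbf{P}(\Omega_L) < \tau$ and $\mathbf{P}(\Omega_U) \leq 1-\tau$; you instead use the clamp $\mathrm{med}\{a,c,b\}$, which equals $\sup Y$ on $\Omega_L$, $\inf Y$ on $\Omega_U$, and $c$ on the middle event, and whose CDF you compute \emph{exactly} as $\tilde C_Y$ below $c$ and $\tilde T_Y$ at $c$. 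This buys two small things: the construction is self-contained (the clamp is itself a selection whenever $Y$ is a.s.\ nonempty and closed, so the hypothesis $Sel(Y)\neq\emptyset$ becomes redundant rather than an input), and the exact CDF identities make the verification $q_y(\tau)=c$ transparent. One nitpick: you say the dictionary $\tilde T_Y(t)=\mathbf{P}(a\le t)$, $\tilde C_Y(t)=\mathbf{P}(b\le t)$ is ``exactly where convexity is used,'' but that dictionary only needs closedness (so that a finite infimum is attained); convexity is what you actually invoke one step later, to conclude $c\in Y(\omega)$ when $a(\omega)\le c\le b(\omega)$, which is precisely what fails in the hole example $[-2,-1]\cup[1,2]$.
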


\begin{proof}
If $\left( \tilde{T}_{Y}^{-1}\left( \tau \right) ,\tilde{C}_{Y}^{-1}\left( \tau \right) \right)$ is empty, then the first claim in the theorem is trivially satisfied.
Now, suppose that it is non-empty.
Fix $\tau \in \left( 0,1\right) $ and take $t\in \left( \tilde{T}_{Y}^{-1}\left( \tau \right) ,\tilde{C}_{Y}^{-1}\left( \tau \right) \right)$. 
Let $\Omega _{L}=\left\{ \omega :\sup Y\left( \omega \right) <t\right\} $, $\Omega _{U}=\left\{ \omega :\inf Y\left( \omega \right) >t\right\} $, and $\Omega _{M}=\Omega \setminus \left( \Omega _{L}\cup \Omega _{U}\right) $. By
definition, $\mathbf{P}\left( \Omega _{L}\right) < \tau $ and $\mathbf{P}\left( \Omega _{U}\right) < 1-\tau $. 
By $Sel(Y) \neq \emptyset$, we choose $y \in Sel(Y)$.
Let
\begin{equation*}
\tilde{y}\left( \omega \right) =
\begin{cases}
y(\omega) & \omega \in \Omega_L \cup \Omega_U
\\
t & \omega \in \Omega_M
%\sup Y\left( \omega \right) - \delta(\omega) & \omega \in \Omega _{L}
%\\ 
%t & \omega \in \Omega _{M}
%\\ 
%\inf Y\left( \omega \right) + \delta(\omega) & \omega \in \Omega _{U}
\end{cases}
\end{equation*}
%where $\delta(\omega)$ is a small non-negative number such that $\sup Y\left( \omega \right) - \delta(\omega) \in Y(\omega)$ if $\omega \in \Omega_L$ and $\inf Y\left( \omega \right) + \delta(\omega) \in Y(\omega)$ if $\omega \in \Omega_U$.
Since $Y\left( \omega \right) $ is convex, $t \in Y(\omega)$ for all $\omega \in \Omega_M$, and thus the random variable $\tilde y$ defined above is a selection of $Y$. 
By construction, $q_{\tilde{y}}\left( \tau \right) = t$ and thus $\left( \tilde{T%
}_{Y}^{-1}\left( \tau \right) ,\tilde{C}_{Y}^{-1}\left( \tau \right) \right)
\subset \Theta^Y_0\left( \tau \right) $.

Suppose that $\inf Y(\omega) > -\infty$ $\mathbf{P}$-a.s.
Then, since $Y(\omega)$ is closed, the random variable $\tilde y$ defined by $\tilde y(\omega) = \inf Y(\omega)$ is a selection of $Y$.
It also satisfies
$
T_Y^{-1}(\tau) 
= \inf\left\{ t : \mathbf{P}(Y \cap (-\infty,t] \neq \emptyset) \geq \tau \right\}
= \inf\left\{ t : F_{\tilde y}(t) \geq \tau \right\}.
$
Therefore, $\tilde T_Y^{-1}(\tau) = q_{\tilde y}(\tau) \in \Theta^Y_0(\tau)$.

Finally, suppose that $\sup Y(\omega) < \infty$ $\mathbf{P}$-a.s.
Then, since $Y(\omega)$ is closed, the random variable $\tilde y$ defined by $\tilde y(\omega) = \sup Y(\omega)$ is a selection of $Y$.
It also satisfies
$
C_Y^{-1}(\tau) 
= \inf\left\{ t : \mathbf{P}(Y \subset (-\infty,t] ) \geq \tau \right\}
= \inf\left\{ t : F_{\tilde y}(t) \geq \tau \right\}.
$
Therefore, $\tilde C_Y^{-1}(\tau) = q_{\tilde y}(\tau) \in \Theta^Y_0(\tau)$.
\end{proof}

Theorems \ref{prop:id} and \ref{prop:id_convex} together show that ${\Theta^Y_0(\tau)} = \left[q_{\inf Y}\left( \tau \right) ,q_{\sup Y}\left( \tau \right)\right]$ for all $\tau \in (0,1)$, if $Y$ has a selection and $Y(\omega)$ is a compact interval $\mathbf{P}$-a.s.
In other words, $\left[q_{\inf Y}\left( \tau \right) ,q_{\sup Y}\left( \tau \right)\right]$ is a sharp characterization of the identification set $\Theta^Y_0(\tau)$.
There are sufficient conditions that guarantees that $Sel(Y) \neq \emptyset$.
One such condition is that $Y$ is closed-valued and non-empty a.s., as stated in the Fundamental Selection Theorem \citep[][Theorem 2.13]{Molchanov05}.
%The second part of Theorem \ref{prop:id_convex} further states that if we assume that $Y(\omega)$ is a closed interval $\mathbf{P}$-a.s., then the almost sharp interval yields the exactly sharp characterization of the identification set, i.e., ${\Theta^Y_0(\tau)} = \left[\tilde{T}_{Y}^{-1}\left( \tau \right) ,\tilde{C}_{Y}^{-1}\left( \tau \right) \right]$ for all $\tau \in (0,1)$.

Outcome variables which are reported as convex valued sets include several important cases that an empirical researcher may encounter. 
First is the case where $Y$ is generated by a response to a questioner that offers a distinct set of intervals to choose from. 
The conditions in Theorems \ref{prop:id} and \ref{prop:id_convex} are general enough to allows these intervals to be distinct, intersect or even be included in each other. 
A second type of data which is covered by these conditions is willingness to pay surveys. 
Contingent valuation surveys which employ the collapsing interval method are a prominent example for this case.
Sometimes in these surveys, the interval is indeed $[c,\infty)$ for some real $c$, and quantiles can be estimated while expectations cannot. 

Estimation of the sharp identified set $\left[q_{\inf Y}(\tau),q_{\sup Y}(\tau)\right]$ can be implemented simply by taking the sample $\tau$-th quantiles $\widehat q_{\inf Y}(\tau)$ and $\widehat q_{\sup Y}(\tau)$ of $\inf Y$ and $\sup Y$, respectively.
Inference can also be implemented by applying \citet{BeresteanuMolinari08} to the standard limit joint distribution of the empirical quantiles $\left(\widehat q_{\inf Y}(\tau),\widehat q_{\sup Y}(\tau)\right)$.

\section{Covariates}\label{sec:covariates}

Suppose that in addition to $Y$ we observe a vector of $p$ covariates, denoted by $x$.

\begin{assumption}[Data with Covariates]\label{a:conditional}
Let $(x,y^\ast,Y): \Omega \rightarrow \mathbb{R}^p \times \mathbb{R} \times \mathcal{K}(\mathbb{R})$ be such that 
$y^{\ast }$ is unobserved,
$(x,Y)$ is observed,
$Y(\omega)$ is non-empty $\mathbf{P}$-a.s.,
$y^{\ast }(\omega) \in Y(\omega),$ $\mathbf{P}$-a.s., and
the regular conditional probability measures of $Y$ and $y^\ast$ given $x$ exist.
\end{assumption}

The true response $y^\ast$ can arbitrarily depend on the covariates $x$ as long as the regular conditional probability measures exist.
Since $y^\ast$ is allowed to arbitrarily depend on $x$ and $y^\ast$ is contained in $Y$, the random set $Y$ generally depends on the covariates $x$ as well.

To account for the observed covariates, we use the following extended notations.
Let $F_{y|x}$ denote the conditional cumulative distribution function of $y$ given $x$. 
Let $q_{y|x}$ denote the conditional quantile function of $y$ given $x$.
In light of the regular conditional probability measures, the conditional containment functional $C_{Y|x}$ and conditional capacity functional $T_{Y|x}$ of $Y$ given $x$ are defined by 
\begin{align*}
C_{Y|x}(K) &= \mathbf{P}(\left\{ \omega : Y(\omega) \subset K, X(\omega)=\xi\right\}|\left\{ \omega : X(\omega)=\xi\right\})
\qquad\text{and}\\
T_{Y|x}(K) &= \mathbf{P}(\left\{ \omega : Y(\omega) \cap K \neq \emptyset, X(\omega)=\xi\right\}|\left\{ \omega : X(\omega)=\xi\right\}),
\end{align*}
respectively.

\subsection{Conditional Quantiles}\label{sec:nonparametric_conditional_quantile}

Theorems \ref{prop:id} and \ref{prop:id_convex} presented in Section \ref{sec:identification} naturally extend to conditional quantile counterparts.
Define the $\tau$-th conditional quantile set of $Y$ gien $x=\xi$ by
\begin{equation*}
\Theta^{Y|x}_0\left( \tau | \xi \right) =\left\{ q_{y|x}\left( \tau | \xi \right) :(x,y)\in Sel\left( x,Y\right) \right\}.
\end{equation*}
The following two corollaries are the extended counterparts of Theorems \ref{prop:id} and \ref{prop:id_convex}.

\begin{corollary}[Partial Identification]\label{corollary:id}
Suppose that Assumption \ref{a:conditional} holds. 
We have 
$\Theta^{Y|x}_0\left( \tau | \xi \right) \subset \left[ q_{\inf Y|x}\left( \tau |\xi\right) ,q_{\sup Y|x}\left( \tau |\xi\right) \right]$ for every $\tau \in \left( 0,1\right) $.
\end{corollary}

\begin{corollary}[Sharpness]\label{corollary:id_convex}
Suppose that Assumption \ref{a:conditional} holds. 
If $Y$ is a convex valued random set in $\mathcal{K}(\mathbb{R})$ with $Sel(x,Y) \neq \emptyset$, then $\left(q_{\inf Y|x}\left( \tau |\xi\right) ,q_{\sup Y|x}\left( \tau |\xi\right) \right) \subset \Theta^{Y|x}_0\left( \tau |\xi \right)$ for all $\tau \in \left( 0,1\right)$.
Furthermore, if in addition $\inf Y(\omega) \in Y(\omega) > -\infty$ for $\mathbf{P}$-a.s. $\omega \in \Omega$ then $q_{\inf Y|x}\left( \tau |\xi\right) \in \Theta^{Y|x}_0\left( \tau |\xi\right)$ for all $\tau \in \left( 0,1\right)$.
Similarly, if in addition $\sup Y(\omega) < \infty$ for $\mathbf{P}$-a.s. $\omega \in \Omega$, then $q_{\sup Y|x}\left( \tau|\xi \right) \in \Theta^{Y|x}_0\left( \tau |\xi\right)$ for all $\tau \in \left( 0,1\right)$.
\end{corollary}

Estimation of the sharp identified set $\left[q_{\inf Y|x}(\tau|\xi),q_{\sup Y|x}(\tau|\xi)\right]$ can be implemented by local polynomial estimators $\widehat q_{\inf Y|x}(\tau|\xi)$ and $\widehat q_{\sup Y|x}(\tau|\xi)$ for the $\tau$-th quantiles of $\inf Y$ and $\sup Y$, respectively, given $x=\xi$.
Inference can be also implemented by applying \citet{BeresteanuMolinari08} to the limit joint distribution of $\left(\widehat q_{\inf Y}(\tau),\widehat q_{\sup Y}(\tau)\right)$ based on a version of Bahadur representations \citep[e.g.,][]{Chaudhuri1991,guerre_sabbah2012,qu_yoon2015,qu_yoon2018}.

\subsection{Parametric Quantile Regression Models}\label{sec:quantile_regression}

We now turn to parametric quantile regression models.
We define a parametrized $\tau$-th quantile regression function $q( \ \cdot \ , \theta(\tau))$ by
\begin{equation*}
F_{y \mid x}( q(x,\theta(\tau)) \mid x) = \tau \qquad\text{$\mathbf{P}$-a.s.}
\end{equation*}
for all $\tau \in (0,1)$.
With the covariates $x$ allowed to include a constant,
a special case is the linear quantile regression function given by $q(x,\theta(\tau)) = x^\prime \theta(\tau)$.
We would like to identify $\theta(\tau)$ for each $\tau \in (0,1)$.
The identification set is defined by
\begin{equation}\label{eq:identified_set_parametric_quantile_regression}
\Theta_0(\tau) = \left\{ \theta(\tau) \in \Theta : F_{y \mid x}( q(x,\theta(\tau)) \mid x) = \tau \ \text{$\mathbf{P}$-a.s.}, (x,y) \in Sel(x,Y) \right\}
\end{equation}
for each $\tau \in (0,1)$.

\begin{theorem}[Partial Identification]\label{prop:quantile_regression:id}
Suppose that Assumption \ref{a:conditional} holds. 
For every $\tau \in (0,1)$,
\begin{equation}\label{eq:moment_inequalities}
\Theta_0(\tau) 
\subset \left\{ \theta(\tau) \in \Theta : 
\begin{array}{l}
E\left[ 1[Y \subset (-\infty,q(x,\theta(\tau))]] - \tau \mid x\right] \leq 0
\\
E\left[ \tau - 1[Y \cap (-\infty,q(x,\theta(\tau)]] \neq \emptyset] \mid x\right] \leq 0
\end{array}
\ \mathbf{P}-a.s.\right\}.
\end{equation}
\end{theorem}

\begin{proof}
As in the proof of Theorem \ref{prop:id}, $y \in Sel(Y \mid x=\xi )$ implies $\tilde C_{Y \mid x} (t \mid \xi) \leq F_{y \mid x}(t \mid \xi) \leq \tilde T_{Y \mid x} (t \mid \xi)$ for all $t \in \mathbb{R}$.
Therefore, $(x,y) \in Sel(x,Y)$ implies $\tilde C_{Y \mid x} (q(x,\theta(\tau)) \mid x) \leq F_{y \mid x}(q(x,\theta(\tau)) \mid x) \leq \tilde T_{Y \mid x} (q(x,\theta(\tau)) \mid x)$ $\mathbf{P}$-a.s.
Thus,
\begin{eqnarray*}
\Theta_0(\tau) &=& \left\{ \theta(\tau) \in \Theta : F_{y \mid x}( q(x,\theta(\tau)) \mid x) = \tau \ \text{$\mathbf{P}$-a.s.}, (x,y) \in Sel(x,Y) \right\}
\\
&\subset& \left\{ \theta(\tau) \in \Theta : \tilde C_{Y \mid x} (q(x,\theta(\tau)) \mid x) \leq \tau \leq \tilde T_{Y \mid x} (q(x,\theta(\tau)) \mid x) \ \mathbf{P}-a.s.\right\}.
\end{eqnarray*}
Writing the last expression in terms of conditional moments yields the expression in the statement of the theorem.
\end{proof}

This theorem provides conditional moment inequality restrictions to characterize a superset of the identification set $\Theta_0(\tau)$. 
Like Theorem \ref{prop:id} for quantiles, the other direction of set inclusion is not generally guaranteed.
However, like Theorem \ref{prop:id_convex} for quantiles, the following theorem provides an important case where the reverse inclusion is true.

\begin{theorem}[Sharpness]\label{prop:quantile_regression:id_convex}
Suppose that Assumption \ref{a:conditional} holds. 
If $Sel(x,Y) \neq \emptyset$ and $Y$ is a convex valued random set in $\mathbb{R}$ (i.e. $Y$ is interval valued), then
\begin{equation*}
\left\{ \theta(\tau) \in \Theta : 
\begin{array}{l}
E\left[ 1[Y \subset (-\infty,q(x,\theta(\tau))]] - \tau \mid x\right] < 0
\\
E\left[ \tau - 1[Y \cap (-\infty,q(x,\theta(\tau)]] \neq \emptyset] \mid x\right] < 0
\end{array}
\ \mathbf{P}-a.s.\right\}
\subset
\Theta_0(\tau)
\end{equation*}
for all $\tau \in (0,1)$.
Furthermore, if in addition $\inf Y(\omega) > -\infty$ $\mathbf{P}$-a.s., then
\begin{equation*}
\left\{ \theta(\tau) \in \Theta : 
\begin{array}{l}
E\left[ \tau - 1[Y \cap (-\infty,q(x,\theta(\tau)]] \neq \emptyset] \mid x\right] = 0
\end{array}
\ \mathbf{P}-a.s.\right\}
\subset
\Theta_0(\tau)
\end{equation*}
for all $\tau \in \left( 0,1\right)$.
Similarly, if in addition $\sup Y(\omega) < \infty$ $\mathbf{P}$-a.s., then 
\begin{equation*}
\left\{ \theta(\tau) \in \Theta : 
\begin{array}{l}
E\left[ 1[Y \subset (-\infty,q(x,\theta(\tau))]] - \tau \mid x\right] = 0
\end{array}
\ \mathbf{P}-a.s.\right\}
\subset
\Theta_0(\tau)
\end{equation*}
for all $\tau \in \left( 0,1\right)$.
\end{theorem}

\begin{proof}
If
$
\left\{ \theta(\tau) \in \Theta : \tilde C_{Y \mid x} (q(x,\theta(\tau)) \mid x) < \tau < \tilde T_{Y \mid x} (q(x,\theta(\tau)) \mid x) \ \mathbf{P}-a.s.\right\}
$
is empty, then the first claim in the theorem is satisfied.
Now, suppose that this set is non-empty.
Take
$
\theta \in
\left\{ \theta(\tau) \in \Theta : \tilde C_{Y \mid x} (q(x,\theta(\tau)) \mid x) < \tau < \tilde T_{Y \mid x} (q(x,\theta(\tau)) \mid x) \ \mathbf{P}-a.s.\right\}.
$
Let $\Omega_L = \{\omega : \sup Y(\omega) < q(x(\omega),\theta)\}$, $\Omega_U = \{\omega : \inf Y(\omega) > q(x(\omega),\theta)\}$, and $\Omega_M = \Omega \backslash (\Omega_L \cup \Omega_U)$.
Then, $\mathbf{P}(\Omega_L \mid x) < \tau$ and $\mathbf{P}(\Omega_U \mid x) < 1 - \tau$, $\mathbf{P}$-a.s.
By $Sel(x,Y) \neq \emptyset$, we choose $(x,y) \in Sel(x,Y)$.
Let
$$
\tilde y(\omega) =
\begin{cases}
y(\omega) & \omega \in \Omega_L \cup \Omega_U
\\
q(x(\omega),\theta) & \omega \in \Omega_M
%\sup Y(\omega) - \delta(\omega) & \omega \in \Omega_L
%\\
%q(x(\omega),\theta) & \omega \in \Omega_M
%\\
%\inf Y(\omega) + \delta(\omega) & \omega \in \Omega_U
\end{cases}
$$
%where $\delta(\omega)$ is a small non-negative number such that $\sup Y\left( \omega \right) - \delta(\omega) \in Y(\omega)$ if $\omega \in \Omega_L$ and $\inf Y\left( \omega \right) + \delta(\omega) \in Y(\omega)$ if $\omega \in \Omega_U$.
Since $Y\left( \omega \right) $ is convex, $q(x(\omega),\theta) \in Y(\omega)$ for all $\omega \in \Omega_M$, and thus the random variable $\tilde y$ defined above is a selection of $Y$. 
Therefore, $(x,\tilde y) \in Sel(x,Y)$.
By construction, $F_{\tilde y \mid x}(q(x,\theta) \mid x) = \tau$, $\mathbf{P}$-a.s., and thus $\theta \in \Theta_0(\tau)$.
This shows 
$$
\left\{ \theta(\tau) \in \Theta : \tilde C_{Y \mid x} (q(x,\theta(\tau)) \mid x) < \tau < \tilde T_{Y \mid x} (q(x,\theta(\tau)) \mid x) \ \mathbf{P}-a.s.\right\} \subset \Theta_0(\tau).
$$
Writing the expression on the left-hand side in terms of conditional moments yields the expression in the statement of the theorem.

Suppose that $\inf Y(\omega) > -\infty$ $\mathbf{P}$-a.s.
Then, since $Y(\omega)$ is closed, the random variable $\tilde y$ defined by $\tilde y(\omega) = \inf Y(\omega)$ is a selection of $Y$.
Therefore, $(x,\tilde y) \in Sel(x,Y)$.
If we take $\theta \in \left\{ \theta(\tau) \in \Theta : \tau = \tilde T_{Y \mid x} (q(x,\theta(\tau)) \mid x) \ \mathbf{P}-a.s. \right\}$, then
$
\tau
=
\tilde T_{Y \mid x}(q(x,\theta) \mid x)
=
\mathbf{P}(Y \cap (-\infty, q(x,\theta)] \neq \emptyset \mid x)
= F_{\tilde y \mid x}(q(x,\theta) \mid x),
$
$\mathbf{P}$-a.s., and thus $\theta \in \Theta_0(\tau)$.
This shows 
$$
\left\{ \theta(\tau) \in \Theta : \tilde T_{Y \mid x} (q(x,\theta(\tau)) \mid x) =\tau \ \mathbf{P}-a.s.\right\} \subset \Theta_0(\tau).
$$
Writing the expression on the left-hand side in terms of conditional moments yields the expression in the statement of the theorem.

Suppose that $\sup Y(\omega) < \infty$ $\mathbf{P}$-a.s.
Then, since $Y(\omega)$ is closed, the random variable $\tilde y$ defined by $\tilde y(\omega) = \sup Y(\omega)$ is a selection of $Y$.
Therefore, $(x,\tilde y) \in Sel(x,Y)$.
If we take $\theta \in \left\{ \theta(\tau) \in \Theta : \tilde C_{Y \mid x} (q(x,\theta(\tau)) = \tau \mid x) \ \mathbf{P}-a.s. \right\}$, then
$
\tau
=
\tilde C_{Y \mid x}(q(x,\theta) \mid x)
=
\mathbf{P}(Y \subset (-\infty, q(x,\theta)] \mid x)
= F_{\tilde y \mid x}(q(x,\theta) \mid x),
$
$\mathbf{P}$-a.s., and thus $\theta \in \Theta_0(\tau)$.
This shows 
$$
\left\{ \theta(\tau) \in \Theta : \tilde C_{Y \mid x} (q(x,\theta(\tau)) \mid x) = \tau \ \mathbf{P}-a.s.\right\} \subset \Theta_0(\tau).
$$
Writing the expression on the left-hand side in terms of conditional moments yields the expression in the statement of the theorem.
\end{proof}

Theorems \ref{prop:quantile_regression:id} and \ref{prop:quantile_regression:id_convex} together show that the conditional moment inequality restrictions in (\ref{eq:moment_inequalities}) provide a sharp characterization of the identification set $\Theta_0(\tau)$ for all $\tau \in (0,1)$ if $(x,Y)$ has a selection and $Y(\omega)$ is a compact interval $\mathbf{P}$-a.s.
One sufficient condition for the condition, $Sel(x,Y) \neq \emptyset$, of Theorem \ref{prop:quantile_regression:id_convex} is that $Y$ is closed-valued and non-empty, as stated in the Fundamental Selection Theorem \citep[][Theorem 2.13]{Molchanov05}.
%Furthermore, the second part of Theorem \ref{prop:quantile_regression:id_convex} states that if $Y(\omega)$ is a closed interval $\mathbf{P}$-a.s., then the conditional moment inequalities in (\ref{eq:moment_inequalities}) provide the exactly sharp characterization of the identification set $\Theta_0(\tau)$ for all $\tau \in (0,1)$.
%In this case, t
The conditional moment inequalities can be rewritten more simply as
\begin{equation}\label{eq:simple_moment_inequalities}
\Theta_0(\tau) =
\left\{ \theta(\tau) \in \Theta : 
\begin{array}{l}
E\left[ 1[y^u  \leq q(x,\theta(\tau))] - \tau \mid x\right] \leq 0
\\
E\left[ \tau - 1[y^l \leq q(x,\theta(\tau))] \mid x\right] \leq 0
\end{array}
\ \mathbf{P}-a.s.\right\}
\end{equation}
where $(x,y^l,y^u)$ is an $\mathbb{R}^{p+2}$-dimensional random vector generated by $(x,Y)$ through the transformation $(x(\omega),Y(\omega)) \mapsto (x(\omega),\min Y(\omega), \max Y(\omega)) =: (x(\omega),y^l(\omega),y^u(\omega))$ for each $\omega \in \Omega$.

For the quantile set, the sharp identification set is guaranteed to be an interval (see Section \ref{sec:identification}).
For the current setting where the sharp identification set is only implicitly characterized by a system of conditional moment inequalities, it is not clear if the identification set has nice geometric properties such as convexity.
Suppose that the quantile regression is specified in the linear-in-parameters form $q(x,\theta(\tau)) = x^\prime \theta(\tau)$.
In this case, the identification set $\Theta_0(\tau)$ can be shown to be convex.
Consequently, projections of the identification set $\Theta_0(\tau)$ on each coordinate is an interval.

\begin{theorem}[Convexity]
Suppose that Assumption \ref{a:conditional} holds. 
Suppose that $q(x,\theta(\tau)) = x^\prime \theta(\tau)$ for all $\tau \in (0,1)$.
If $Y$ is  a closed convex valued random set in $\mathbb{R}$, then the identification set $\Theta_0(\tau)$ is convex for all $\tau \in (0,1)$. 
\end{theorem}

\begin{proof}
Fix $\tau$.
By Theorems \ref{prop:quantile_regression:id} and \ref{prop:quantile_regression:id_convex}, the identification set $\Theta_0(\tau)$ is given by (\ref{eq:simple_moment_inequalities}) under the given conditions.
Let $\theta^1, \theta^2 \in \Theta_0(\tau)$ and $\lambda \in (0,1)$.
Then, $q(x,\theta(\tau)) = x^\prime \theta(\tau)$ implies $\{ \omega \in x^{-1}(\{\xi\}) \subset \Omega : y^u(\omega) \leq q(x(\omega),\lambda \theta^1 + (1-\lambda) \theta^2) \} \subset \{ \omega \in x^{-1}(\{\xi\}) \subset \Omega : y^u(\omega) \leq \max \{ q(x(\omega), \theta^1), q(x(\omega), \theta^2)\} \}$ for every $\xi \in \mathbb{R}^p$,
and thus
$$
E\left[ 1[y^u  \leq q(x,\lambda \theta^1 + (1-\lambda) \theta^2)] - \tau \mid x\right] \leq 0 \quad \mathbf{P}-a.s.
$$
Also, $q(x,\theta(\tau)) = x^\prime \theta(\tau)$ implies $\{ \omega \in x^{-1}(\{\xi\}) \subset \Omega : y^l(\omega) \leq q(x(\omega),\lambda \theta^1 + (1-\lambda) \theta^2) \} \supset \{ \omega \in x^{-1}(\{\xi\}) \subset \Omega : y^l(\omega) \leq \min \{ q(x(\omega), \theta^1), q(x(\omega), \theta^2)\} \}$ for every $\xi \in \mathbb{R}^p$,
and thus
$$
E\left[ \tau - 1[y^l \leq q(x,\lambda \theta^1 + (1-\lambda) \theta^2)] \mid x\right] \leq 0 \quad \mathbf{P}-a.s.
$$
Therefore, $\lambda \theta^1 + (1-\lambda) \theta^2 \in \Theta_0(\tau)$, showing that $\Theta_0(\tau)$ is convex.
\end{proof}

This geometric information is useful in practice.
For example, it provides a guidance about the direction of computational search for a grid representation of set estimates.
Furthermore, this result guarantees that a projection of the identification set is an interval, which will be useful when methods of inference becomes available in the future for projections of identified sets under conditional moment inequalities.\footnote{See for example \cite{BelloniBugniChernozhukov2018} and \cite{BugniShi2018}. Also see \citet{KaidoMolinariStoye2016}.}

The conditional moment inequality restrictions in (\ref{eq:simple_moment_inequalities}) to characterize the identification set $\Theta_0(\tau)$ can be rewritten as
$$
E\left[ m_j(w,\theta) \mid x \right] \geq 0 \quad\mathbf{P}-a.s. \quad\text{for } j = 1,2,
$$
where the moment functions, $m_1$ and $m_2$, are defined by
\begin{eqnarray*}
m_1(w,\theta) &=& 1[y^l \leq q(x,\theta)] - \tau \qquad\text{and}
\\
m_2(w,\theta) &=& \tau - 1[y^u \leq q(x,\theta)]
\end{eqnarray*}
for $w = (x,y^l,y^u)$.
This model fits in the framework for which the existing literature provides methods of inference via moment selection, e.g., \citet{AndrewsShi2013}.
For convenience of the readers, we describe the procedure of inference based on this existing literature in Appendix \ref{sec:procedure_inference}. Furthermore, Appendix \ref{sec:best_linear} provides a practical procedure to supplement this inference procedure in the context of linear quantile regressions.

\section{Simulation Studies}\label{sec:simulation_studies}

In this section, we use simulation studies to evaluate the proposed method.
We consider the identified set $\Theta_0(\tau)$ of quantile regression parameters defined in (\ref{eq:identified_set_parametric_quantile_regression}), and conduct inference for the true parameters based on the moment inequalities (\ref{eq:simple_moment_inequalities}) that sharply characterize this identified set.
The method of inference outlined in Appendix \ref{sec:procedure_inference} is applied to these moment inequalities (\ref{eq:simple_moment_inequalities}).
With this method of inference and our characterization of the identified set, we use Monte Carlo simulations to check the size control at the true parameter value contained in the identified set.

In order to generate analytically tractable parametric quantile regressions, we use the following model.
\begin{eqnarray}
Y_i &=& \sum_{t=-\infty}^\infty 1[y_i \in (t-0.1,t]] \cdot [t-0.1,t],
\qquad\text{where}
\nonumber
\\
y_i &=& 1.0 + (1.0 + x_i) \cdot \varepsilon_i,
\label{eq:mc_parametric}
\\
x_i &\sim& U(0,1), \qquad \varepsilon_i \ \sim \ U(0,1), \qquad \text{and} \ \quad x_i \perp\!\!\!\perp \varepsilon_i.
\nonumber
\end{eqnarray}
For example, if $x_i = 0.385$ and $\varepsilon_i = 0.573$, then $y_i \approx 1.794$ and $Y_i = [1.7,1.8]$.
This individual $i$ is at the ($\tau = \varepsilon =$) 0.573-th quantile, and the corresponding quantile regression function is given by $q(x,\theta(0.573)) = 1.573 + 0.573 x$.
In fact, model (\ref{eq:mc_parametric}) yields the linear model for each quantile $\tau \in (0,1)$ with intercept $\theta_1(\tau) = 1+\tau$ and slope $\theta_2(\tau) = \tau$.
Hence, the true quantile regression parameter vector is $\theta(\tau) = (\theta_1(\tau),\theta_2(\tau)) = (1+\tau,\tau)$. 
Our identification theory shows that the moment inequalities (\ref{eq:simple_moment_inequalities}) give the sharp characterization of the identified set $\Theta_0(\tau)$ containing this true parameter vector, and an application of the existing inference method outlined in Appendix \ref{sec:procedure_inference} thus allows for inference about this true parameter vector.

For Monte Carlo experiments, 1,000 small samples are drawn of sizes 100 and 200 observations.
Based on the methods presented in Section \ref{sec:quantile_regression}, we conduct tests of $H_0: \theta(\tau) = \theta$ versus $H_1: \theta(\tau) \neq \theta$ for various parameter vectors $\theta$ for each quantile $\tau \in \{0.25, 0.50, 0.75\}$.
The critical value is estimated using the procedure outlined in Appendix \ref{sec:procedure_inference} with 1,000 bootstrap replications for the size $\alpha = 0.05$.\footnote{We used $R=2$, $\epsilon_n=0.05$, $\kappa_n=(0.3\ln(n))^{\frac{1}{2}}$ and $ B_n = (0.4\ln(n)/\ln\ln(n))^{\frac{1}{2}}$.}

Figure \ref{fig:mc_parametic} shows graphs of rejection frequencies for the test described above.
Parts (I), (II), and (III) of the figure focus on quantiles $\tau=$ 0.25, 0.50, and 0.75, respectively.
For ease of presentation, we focus on the one-dimensional slice of the two-dimensional parameter space running through the true parameter point at each quantile $\tau$.
Specifically, for the $\tau$-th quantile, the results are displayed on the set $\{(\theta_1(\tau),\theta_2(\tau)) \in \mathbb{R}^2 : \theta_1(\tau) = 1+ \tau, \theta_2(\tau) \in [-0.5,1.5] \}$, and the horizontal axes in Figure \ref{fig:mc_parametic} represent this set indexed by $\theta_2(\tau) \in [-0.5,1.5]$. %the results are displayed along the line segment $\{1+\tau\} \times [-0.5,1.5]$ that passes through the true parameter point $(1+\tau,\tau)$.
The true point is indicated in the figure by a solid vertical line.
The rejection frequencies are a indicated by dashed and dotted curves for sample sizes $n=100$ and 200, respectively.
%As expected, the frequencies at the true points fall below the designed size $\alpha = 0.05$.
The uniform asymptotic theory of \citet[][Section 5]{AndrewsShi2013} together with our identification theory predicts that the implied size control entails the rejection probabilities asymptotically at or below the nominal rejection probability in the identified set. The curves displayed in the figure running at or below the nominal rejection probabilities are thus consistent with the theory.
%In fact, the frequencies fall below 0.05 not only at the true parameter point, but also in an interval containing the true parameter point.
%This is consistent with the fact that the quantile regression parameters are not point identified.

\section{Empirical Application}\label{sec:empirical}

To assess the effects of localized environmental benefits on housing values through an extended hedonic regression model, \cite[][henceforth GRT]{GamperTimmins13} use within-tract quantiles of housing values as outcome variables.
Controlling for tract fixed effects through first differencing as well as controlling for various observed tract attributes, they find that localized cleanup of hazardous waste sites leads to larger appreciation in house prices at lower quantiles.
The Decennial Census provides house prices only in terms of intervals.
GRT used these intervals to compute quantile points of house prices -- see Table A.1 of their online appendix.

In this paper, we re-examine the empirical analysis of GRT by treating the housing prices as interval-valued data as in the original data, rather than adding assumptions to obtain point-valued housing prices.
Our econometric procedure consists of three steps.
First, we obtain within-tract quantile sets $\hat\Theta^Y_{i,t}(\tau)$ for each tract $i \in \{1,\ldots,n\}$ for each year $t \in \{90,00\}$ by using the methods proposed in Appendix \ref{sec:discrete}.
Due to the discrete nature of the house price intervals, we can rely on the super-consistency property presented in Appendix \ref{sec:discrete} -- see Theorem \ref{prop:discrete}.
In the second step, we take the set difference $\Delta \hat\Theta^Y_{i}(\tau) = \hat\Theta^Y_{i,00}(\tau) - \hat\Theta^Y_{i,90}(\tau)$ and take first differences $\Delta x^i$ of all the observed covariates as in GRT.
This second step is used to difference out tract fixed effects.
Finally, we use \cite[Sec. 4]{BeresteanuMolinari08} and the constructed data $(\Delta x^i, \Delta \hat\Theta^Y_{i}(\tau))$ to obtain set-valued best linear predictors of $\Delta x^i$ on $\Delta \hat\Theta^Y_{i}(\tau)$.
In the asymptotic setting where the number of within-tract houses increases as the number $n$ of tracts increases, the effects of the first-step quantile set estimation on the subsequent set estimation can be ignored due to the super-consistency result for the quantile set estimation under discrete categories (see Theorem \ref{prop:discrete} in Appendix \ref{sec:superconsistency}).

We use the same set of observed controls $\Delta x^i$ as in GRT.
In Table \ref{tab:empirical_results}, we report the estimates for the coefficient of cleanup of hazardous waste sites.
The table consists of nine rows showing estimation results across the nine quantiles $\tau \in \{0.1,\ldots,0.9\}$.
Column (I) shows point estimates of GRT as a reference.\footnote{We remark that the sample used by GRT and the sample used by us are slightly different likely due to different data selection procedures. Furthermore, they use sampling weights, while we do not.}
In column (II), we assume that $y$ equals the lower bound $a$ of interval categories $Y=[a,b]$, and obtained point estimates.
Similarly, in column (III) we assume that $y$ equals the upper bound $b$ of interval categories $Y=[a,b]$, and obtained point estimates.
Column (IV) shows the set estimates according to the three-step procedure outlined in the previous paragraph.
Column (V) in addition shows the 95\% confidence region to account for the sampling variations.

For most quantiles $\tau \in \{0.1,\ldots,0.7\}$, the numbers in column (I) are located around the middle between the numbers shown in columns (II) and (III).
Furthermore, those numbers in column (I) are also located around the middle of the set estimates shown in column (IV) and the 95\% confidence regions shown in column (V).
Column (I) indicates that GRT obtain significant effects across all the quantiles.
Column (II) implies that the significant positivity may disappear, depending on which selection of the interval-valued outcomes we take.
Column (III) implies that we may get even larger and more significant effects than those reported by GRT, depending on which selection of the interval-valued outcomes we take.
Columns (IV) and (V) show that the range of effects that we could get out of all the potential selections in the interval-valued outcome is large, and of course contains those numbers reported in columns (II) and (III) in particular.
Previous results depend on the assumptions made on the selection mechanism (e.g., taking the middle point).
Our analysis shows that different assumptions on the selection can lead to quite different results.

\section{Summary}

This paper investigates the identification of quantiles and of quantile regressions when the outcome variable is interval valued.
We first identify the quantiles of a random set. 
The quantile set is shown to be identified by the containment and capacity quantiles.
The sharpness of the identification set was shown when the set-valued outcomes are interval-valued.
The identification argument for quantiles then is extended to set identification of quantile regression parameters.
We show that a system of conditional moment inequalities, involving the cumulative containment and cumulative capacity functionals, characterize the identification set for quantile regression parameters.
The sharpness of the identification set was shown when the set-valued outcomes are interval-valued.

\newpage
\section*{Appendix}
\appendix
\section{Procedure of Inference for Quantile Regression Parameters}\label{sec:procedure_inference}

This appendix section provides the procedure of inference for quantile regression parameters based on the conditional moment inequality restrictions derived in Section \ref{sec:quantile_regression} via the method of \citet{AndrewsShi2013}.
Recall that, if $Y$ is a closed convex value random set, then identification set $\Theta_0(\tau)$ of the quantile regression parameters $\theta$ is characterized by the conditional moment inequalities
$$
E\left[ m_j(w,\theta) \mid x \right] \geq 0 \quad\mathbf{P}-a.s. \quad\text{for } j = 1,2,
$$
where the moment functions, $m_1$ and $m_2$, are defined by
\begin{eqnarray*}
m_1(w,\theta) &=& 1[y^l \leq q(x,\theta)] - \tau \qquad\text{and}
\\
m_2(w,\theta) &=& \tau - 1[y^u \leq q(x,\theta)]
\end{eqnarray*}
for $w = (x,y^l,y^u)$.
Recall also that this set is convex if the quantile regression function $q( \ \cdot \ , \theta)$ is linear in parameters $\theta$.

Normalize the vector $x$ of $p$ covariates into $[0,1]^p$, and define the sample moment functions and the sample variances by
\begin{eqnarray*}
\bar m_{nj}(\theta,g) &=& n^{-1} \sum_{i=1}^n m_j(w_i,\theta) \cdot g(x_i) \quad\text{for } j=1,2 \text{ and}
\\
\hat\sigma_{nj}^2(\theta,g) &=& n^{-1} \sum_{i=1}^n \left( m_j(w_i,\theta) \cdot g(x_i) - \bar m_{nj}(\theta,g) \right)^2 \quad\text{for }j=1,2,
\end{eqnarray*}
for a function $g$ to be defined below.
To bound the sample variance away from zero, we use
$$
\bar \sigma_{nj}^2(\theta,g) = \hat\sigma_{nj}^2(\theta,g) + \epsilon_n \cdot \hat\sigma_{nj}^2(\theta,1) \quad\text{for }j=1,2.
$$
With $g_{a,r}(x) = 1\left[ x \in \prod_{u=1}^p \left( \frac{a_u-1}{2r}, \frac{a_u}{2r} \right] \right]$, an approximated test statistic at $\theta$ is computed by
\begin{equation*}
\bar T_{nR}(\theta) = \sum_{r=1}^R (r^2 + 100)^{-1} \sum_{a \in [1,\cdots,2r]^p} (2r)^{-p} \left( \left[\frac{n^{\frac{1}{2}} \bar m_{n1}(\theta,g_{a,r})}{\bar \sigma_{n1}(\theta,g_{a,r})}\right]_-^2 + \left[\frac{n^{\frac{1}{2}} \bar m_{n2}(\theta,g_{a,r})}{\bar \sigma_{n2}(\theta,g_{a,r})}\right]_-^2 \right)
\end{equation*}
for some truncation number $R \in \mathbb{N}$, where $[x]_- = -x$ if $x <0$ and $[x]_-=0$ if $x \geq 0$.

Lemma 1 of \citet{AndrewsShi2013} guarantees that our definition of $\bar T_{nR}(\theta)$ satisfies Assumptions S1--S4 in their paper.
Likewise, Lemma 3 of \citet{AndrewsShi2013} guarantees that the choice of $g_{a,r}$ defined above satisfies Assumptions CI and M in their paper.
In order to assure that we can use the method of \citet{AndrewsShi2013}, it remains to check their condition (2.3).
The following conditions suffice:
$w$ is i.i.d.;
$0 < \text{Var}(m_1(w,\theta)) < \infty$;
$0 < \text{Var}(m_2(w,\theta)) < \infty$;
$E | m_1(w,\theta) / \sigma_1(\theta) |^{2+\delta} < \infty$; and
$E | m_2(w,\theta) / \sigma_2(\theta) |^{2+\delta} < \infty$;
where $\delta > 0$, $\sigma_1(\theta) = \text{Var}(m_1(w,\theta))$, and $\sigma_2(\theta) = \text{Var}(m_2(w,\theta))$.

To compute the critical value for $\bar T_n(\theta)$, generate $B$ bootstrap samples $\{w_{ib}^\ast : i = 1,\cdots,n\}$ for $b=1,\cdots,B$.
For each bootstrap sample $\{w_{ib}^\ast : i = 1,\cdots,n\}$, compute $\bar m_{nbj}^\ast(\theta,g)$ and $\bar \sigma_{nbj}^\ast(\theta,g)$ for $j=1,2$.
For each bootstrap sample, compute the bootstrap test statistic\small
\begin{eqnarray*}
\bar T_{nbR}^\ast(\theta) = \sum_{r=1}^R (r^2 + 100)^{-1} \sum_{a \in [1,\cdots,2r]^p} (2r)^{-p} \left( \left[\frac{n^{\frac{1}{2}} (\bar m_{nb1}^\ast(\theta,g_{a,r}) - \bar m_{n1}(\theta,g_{a,r})) / \hat\sigma_{n1}(\theta,1) + \varphi_n(\theta,g_{a,r})}{\bar \sigma_{nb1}^\ast(\theta,g_{a,r}) / \hat\sigma_{n1}(\theta,1)}\right]_-^2 \right.
\\
\left. + \left[\frac{n^{\frac{1}{2}} (\bar m_{n2}(\theta,g_{a,r}) - \bar m_{n2}(\theta,g_{a,r})) / \hat\sigma_{n2}(\theta,1) + \varphi_n(\theta,g_{a,r})}{\bar \sigma_{nb2}^\ast(\theta,g_{a,r}) / \hat\sigma_{n2}(\theta,1)}\right]_-^2 \right)
\end{eqnarray*}\normalsize
where $\varphi_{nj}(\theta,g)$ is given by
$$
\varphi_{nj}(\theta,g) = B_n 1[ \kappa_n^{-1} n^{\frac{1}{2}} \bar m_{nj}(\theta,g) / \bar\sigma_{nj}(\theta,g) > 1]
\quad\text{for }j=1,2.
$$
\citet{AndrewsShi2013} recommend $\epsilon_n = 0.05$, $\kappa_n =(0.3 \ln(n))^{\frac{1}{2}}$, and $B_n = (0.4 \ln(n) / \ln\ln(n))^{\frac{1}{2}}$.
The critical value $\bar c_{nRB,1-\alpha}^\ast(\theta)$ is set to be the $1-\alpha+10^6$ sample quantile of the bootstrap test statistics.
Thus, a nominal level $1-\alpha$ confidence set is approximated by
$$
\left\{ \theta \in \Theta : \bar T_{nR}(\theta) \leq \bar c_{nRB,1-\alpha}^\ast(\theta) \right\}.
$$
Because finding the approximate region for this set is computationally burdensome when the dimension of the parameter set $\Theta$ is large, we provide in Appendix \ref{sec:best_linear} an efficient algorithm to compute the estimate of the identification set.

\section{Set of Best Linear Predictors}\label{sec:best_linear}

In this section, we present a novel set programming method for ease of implementation of obtaining set estimate for linear quantile regressions, as motivated for practical ease fo finding the approximate region at the end of Appendix \ref{sec:procedure_inference}.
We focus on the linear quantile regression model, $q(x,\theta) = x^\prime \theta$. 
We define the identification region $\Theta^\ast_0(\tau)$ for the best linear predictors (BLP) $\theta$ by extending \cite{KoenkerBasset78} for the case of interval valued $Y$. 
In addition, we show in Theorem \ref{prop:blp_containment} that the identification region $\Theta^\ast_0(\tau)$ for the BLP model is a superset of the identification region $\Theta_0(\tau)$ defined in (\ref{eq:simple_moment_inequalities}) for linear models.
This result is important since finding the identification $\Theta_0(\tau)$ defined in (\ref{eq:simple_moment_inequalities}) is challenging relatively to finding the identification region $\Theta^\ast_0(\tau)$ for the BLP model. 
Therefore, one can start by finding the identification region for the BLP model and use this superset as a starting region where we should look for the identification region of Section \ref{sec:quantile_regression}.

%Section \ref{sec:quantile_regression} derives conditional moment restrictions to characterize the identification set $\Theta_0(\tau)$ for quantile regressions.
%When the dimension of the parameter set $\Theta$ is large, it is often difficult to take aim at the approximate region of the identification set at which to test the conditional moment restrictions.
%In this section, we focus on the linear quantile regression $q(x,\theta) = x^\prime \theta$, and propose how to approximate the idnetintification set.
For a given selection $(x,y) \in Sel(x,Y)$, we can estimate the associated parameters by minimizing the risk
\begin{equation}
R_\tau\left( \theta ;x,y\right) =E\left[ \rho_\tau \left( y-x^\prime \theta \right) \right]
\label{LossFunction}
\end{equation}
where $\rho_\tau(u) = u \cdot (\tau - 1[u < 0])$.
(See \cite{KoenkerBasset78} and \cite{Koenker05}).
We propose that the identification set $\Theta_0(\tau)$ be approximated by
\begin{equation*}
\Theta _0^\ast (\tau) =\left\{ \arg \min_{\theta \in \Theta} R_\tau\left( \theta ;x,y\right) : (x,y) \in Sel\left(x, Y\right) \right\} .
\end{equation*}
In appendix \ref{sec:geometric_blp}, we provide some useful geometric properties of this set of best linear predictors.
More importantly, we show in the theorem below that it is useful to locate the identification set $\Theta_0(\tau)$.

\begin{theorem}\label{prop:blp_containment}
If $q(x,\theta) = x' \theta$, then
$\Theta_0(\tau) \subset \Theta_0^\ast(\tau)$.
\end{theorem}

\begin{proof}
Suppose that $\theta \in \Theta_0(\tau)$.
In other words, $F_{y \mid x}(x^\prime \theta \mid x) = \tau$ $\mathbf{P}$-a.s. for a selection $(x,y) \in Sel(x,Y)$.
Taking the gradient of $R_\tau\left( \theta ;x,y\right)$ with respect to $\theta$, we have
\begin{eqnarray*}
\nabla_\theta \int_{\mathbb{R}^p} \left[ (\tau-1) \int_{-\infty}^{\xi^\prime \theta} (\zeta - \xi^\prime \theta) dF_{y \mid x}(\zeta \mid \xi) + \tau \int_{\xi^\prime \theta}^\infty (\zeta - \xi^\prime \theta) dF_{y \mid x}(\zeta \mid \xi) \right] dF_x(\xi)
\\
\ = \
\int_{\mathbb{R}^p} \xi \left[F_{y \mid x}(\xi^\prime \theta \mid \xi) - \tau \right] dF_x(\xi)
\ = \
0
\end{eqnarray*}
where the last equality follows from our choice of $\theta$ satisfying $F_{y \mid x}(x^\prime \theta \mid x) = \tau$ $\mathbf{P}$-a.s.
Therefore, we obtain $\nabla_\theta R(\theta; x,y) = 0$.
Since $R_\tau( \ \cdot \ ; x,y)$ is convex, this implies $\theta \in \Theta_0^\ast(\tau)$.
\end{proof}

The other direction of set inclusion does not hold.
While the identification set $\Theta_0(\tau)$ contains only those parameter vectors $\theta$ that correctly specify the quantile regression $q(x,\theta) = x^\prime \theta$ for some $(x,y) \in Sel(x,Y)$, the approximate set $\Theta_0^\ast(\tau)$ contains many other parameter vectors $\theta$ which only allow $q(x,\theta) = x^\prime \theta$ to be a best linear predictor for some $(x,y) \in Sel(x,Y)$.
By Theorems \ref{prop:blp_containment}, the approximation set $\Theta_0^\ast(\tau)$ does not miss any element of the identification set $\Theta_0(\tau)$.
Therefore, we propose to first compute this set $\Theta_0^\ast(\tau)$ of best linear predictors, and conduct the test of conditional moment inequalities on and around this set.
If the identification set $\Theta_0(\tau)$ is empty, then the parametric quantile regression model is misspecified, and $\Theta_0^\ast(\tau)$ trivially contains $\Theta_0(\tau)$.
In this case of misspecification, the set $\Theta_0^\ast(\tau)$ of best linear predictors itself may be of use for best linear prediction and for causal inference -- see \cite{AngristChernozhukovFernandezval2006} and \cite{KatoSasaki2017}.
The remainder of this section is devoted to a computational algorithm to obtain the approximation set $\Theta_0^\ast(\tau)$.

\subsection{Linear Programming}

%First we review the method for regular random variables. 
%The reason for doing so is to gain insight into what we would want the random set case to satisfy. 
Since a random set can be viewed as a collection of regular random variables, we start by reviewing the regular random variable case.
It seems logical that what we define for random sets should yield a regular quantile regression for singleton random sets.

%\textbf{The canonical LP problem:}
The canonical LP problem is written as the following constrained
minimization problem, 
\begin{eqnarray}
&\min& c^{\prime }\theta  \label{canonical} \\
&s.t& A\theta  = b  \notag \\
&& \theta \geq 0,  \notag
\end{eqnarray}
where $A$ is a $m\times k$ matrix, 
$c$ is a $k\times 1$ vector of coefficients, 
$b$ is a $m\times 1$ vector of right-hand side constraints, and 
$\theta $ is a $k\times 1$ vector of unknowns. 
It is assumed that $m<k$.

Consider a finite random sample of $n$ observations.
Let $y$ be the $n\times 1$ vector of outcomes, $X$ be the $n\times p$ matrix of covariates (which includes a column of ones).
We can solve the least absolute deviation problem corresponding to the $\tau $-quantile regression by using the following linear programming problem,
\begin{eqnarray}
&&\min_{\left( \beta ,u,v\right) \in 
\mathbb{R}
^{p}\times 
\mathbb{R}
_{+}^{2n}}\sum_{i=1}^{n}\tau u_{i}+\left( 1-\tau \right) v_{i}  \label{LP} \\
&&s.t.\ \ \ X\beta +u-v=y.  \notag
\end{eqnarray}
The vector $( \beta _{j}) _{j=1}^{p}$ consists of the coefficients of the linear $\tau $-quantile regression, while $u$ and $v$ are slack parameters (variables). 
The LP problem in (\ref{LP}) is labeled $LP\left( \tau ,X,y\right) $. 
%Results in Linear Programming say that the solution to this problem is a vertex of the constraint set 
%$\left\{ \left( \beta ,u,v\right) \in \mathbb{R}^{p}\times \mathbb{R}_{+}^{2n}:X\beta +u-v=y\right\} $ 
%where the matrix of observed covariates $X$
%and the vector of observed outcomes $y$ are taken as parameters of the LP
%problem. There are $\binom{n}{p}$ such vertices. 
The simplex algorithm provides a solution to the above problem. 
The first stage is to transform the linear programming problem in (\ref{LP}) into the canonical form. 
Note that (\ref{canonical}) requires that all variables over which we minimize be
positive while the coefficients $\beta $ in (\ref{LP}) are unrestricted. 
The first step requires the user to transform the problem into the form in (\ref{canonical}). 
So at first, we can write (\ref{LP}) as 
\begin{eqnarray}
&\min& c^{\prime }x  \label{almost_canonical} \\
&s.t.& Ax = b(y)  \notag \\
&&x \in S  \notag
\end{eqnarray}%
where $c=\left( 0_{p};\tau 1_{n};\left( 1-\tau \right) 1_{n}\right)' $, 
$x=\left( \beta ,u,v\right)' $, 
$A=[X:I_{n\times n},-I_{n\times n}]$, 
$b(y)=y$, and 
$S=\mathbb{R}^{p}\times \mathbb{R}_{+}^{2n}$. 
The first $p$ coordinates of $x$ are unrestricted while the last $2n$ coordinates are restricted to be non-negative. 
Some software packages handle this kind of almost canonical form (e.g. Matlab) but some more traditional code may not. 
Assume w.l.o.g. that the first $p$ rows of the matrix $X$ form a $p\times p$ full rank (and thus invertible) matrix. 
Denote this matrix by $X_{p}$ and similarly denote by $u_{p}$, $v_{p}$ and $y_{p}$
the first $p$ lines of the the corresponding column vectors $u$, $v$ and $y$. 
Denote by $X_{-p}$, $u_{-p}$, $v_{-p}$ and $y_{-p}$ the remaining $n-p$ rows of these matrices and vectors. 
The first $p$ equations in $Ax=b$ as well as the unconstrained variable $\beta $ can be eliminated by writing $\beta =X_{p}^{-1}\left( y_{p}-u_{p}-v_{p}\right) $ and substituting $\beta $ into these $p$ first equations in $Ax=b$. 
The remaining $n-p$ equations then can be written as 
\begin{equation*}
\left( X_{-p}X_{p}^{-1}\right) u_{p}-\left( X_{-p}X_{p}^{-1}\right)
v_{p}+u_{-p}-v_{-p}=y_{-p}-\left( X_{-p}X_{p}^{-1}\right) y_{p}.
\end{equation*}%
Therefore, the problem in (\ref{almost_canonical}) can be written as
\begin{eqnarray}
&\min & c^{\prime }\tilde{x}  \label{canonical2} \\
&s.t.& \tilde{A}\tilde{x} =\tilde{b}(y)  \notag \\
&&\tilde{x} \geq 0  \notag
\end{eqnarray}
where 
$\tilde{c}=\left( \tau 1_{n},\left( 1-\tau \right) 1_{n}\right) $, 
$\tilde{x} =\left( u,v\right) $, $\tilde{A}=\left[ X_{-p}X_{p}^{-1}:I_{n-p\times n-p}:X_{-p}X_{p}^{-1}:I_{n-p\times n-p}\right] $,
and $\tilde{b}(y)=y_{-p}-\left( X_{-p}X_{p}^{-1}\right) y_{p}$. 
Notice that the LP problem in (\ref{LP}) has $p+2n$ variables and $n$ equality constraints and the LP problem in (\ref{canonical2}) has $2n$ variables and $n-p$ equality constraints. 
Out of the solution for $u$ and $v$ in (\ref{canonical2}) we can of course recover the vector of interest $\beta $ by using $\beta =X_{p}^{-1}\left( y_{p}-u_{p}-v_{p}\right) $.

For any ordered set $B$, let $B(i)$ denote the $i$-th element of $B$.
For any ordered set $B \subset \{1,\cdots, 2n\}$ of cardinality $|B|$ and for any $|B|$-dimensional vector $\xi$, we define the $2n$-dimensional vector $\Pi_B \xi$ whose $j$-th coordinate is given by
\begin{equation*}
(\Pi_B \xi)_j = \begin{cases}
\xi_i & \text{if there exists } i \in \{1,\cdots,|B|\} \text{ such that } B(i) = j \\
0     & \text{otherwise}
\end{cases}
\end{equation*}
for each $i \in \{1,\cdots,2n\}$.
The simplex algorithm yields a solution $\tilde x(y)$ with an ordered set $B_{y} \subset \{1,\cdots,2n\}$ of basic indices.
Also let $-B_{y} = \{1,\cdots,2n\} \backslash B_{y}$ denote an ordered set of non-basic indices.
It is required that $|B_{y}| = n-p$, $|-B_{y}| = n+p$, $\tilde x(y)_{B_{y}} \geqslant 0_{n-p}$, and $\tilde x(y)_{-B_{y}} = 0_{n+p}$.
Using these notations, the solution is explicitly written as the $2n$-dimensional vector
$\tilde x(y) = \Pi_{B_{y}} \tilde A^{-1}_{B_{y}} \tilde b(y)$.
Optimality requires $c_{-B_y} - c_{B_y} \tilde A_{B_y}^{-1} \tilde A_{-B_y} \geqslant 0_{n+p}$ and feasibility requires that $\tilde x(y) = \Pi_{B_{y}} \tilde A^{-1}_{B_{y}} \tilde b(y) \geqslant 0_{2n}$.
The simplex algorithm prescribes an efficient computational procedure to find such an index set $B_y$ satisfying these requirements.

\subsection{Set Linear Programming}

A simple brute force approach to set estimation of $\beta$ is to obtain the solution $\tilde x(y) = (u(y),v(y))$ to the optimization problem (\ref{canonical2}) for each $y \in \times_{i=1}^n [y_{Li}, y_{Ui}]$, and to take the union $\cup_{y \in \times_{i=1}^n [y_{Li}, y_{Ui}]} \beta(u)$ where $\beta(y) =X_{p}^{-1}\left( y_{p}-u_{p}(y)-v_{p}(y)\right) $.
However, this exhaustive approach (even with a lattice approximation) is computationally intensive.
In this light, we use some convenient properties of linear programming to propose a faster algorithm to compute the set estimate for $\beta$.

Pick $y^1 \in \times_{i=1}^n [y_{Li}, y_{Ui}]$.
The simplex algorithm yields a solution $\tilde x(y^1)$ with an ordered set $B_{y^1} \subset \{1,\cdots,2n\}$ of basic indices.
Also let $-B_{y^1} = \{1,\cdots,2n\} \backslash B_{y^1}$ denote an ordered set of non-basic indices.
The solution is explicitly written as the $2n$-dimensional vector
$\tilde x(y^1) = \Pi_{B_{y^1}} \tilde A^{-1}_{B_{y^1}} \tilde b(y^1)$.
The next proposition shows that there is a set $\mathcal{Y}^1 \in \times_{i=1}^n [y_{Li},y_{Ui}]$ containing $y^1$ such that the set $B_{y^1}$ of basic indices for (\ref{canonical2}) remains unchanged for all $y \in \mathcal{Y}^1$.
Therefore, once we solve (\ref{canonical2}) for $y = y^1$, we do not need to solve (\ref{canonical2}) again for any other $y \in \mathcal{Y}^1$, and we thus tremendously save our computational resources.

\begin{proposition}
A solution to (\ref{canonical2}) is given by
$\tilde x(y) = \Pi_{B_{y^1}} \tilde A^{-1}_{B_{y^1}} \tilde b(y)$
for all $y \in \mathcal{Y}^1$ where
$\mathcal{Y}^1 = \left\{ y \in \times_{i=1}^n [y_{Li}, y_{Ui}] \left\vert \tilde A^{-1}_{B_{y^1}} \tilde b(y) \geqslant 0_{n-p} \right.\right\}$.
In particular, $y^1 \in \mathcal{Y}^1$.
\end{proposition}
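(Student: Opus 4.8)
The plan is to exploit the standard separation, in the simplex characterization of an optimal basis, between the \emph{optimality} (reduced-cost) condition and the \emph{feasibility} (nonnegativity) condition, together with the crucial observation that among the data defining (\ref{canonical2}) only the right-hand side $\tilde b(y)$ depends on $y$, whereas the constraint matrix $\tilde A$ and the cost vector $c = (\tau 1_n, (1-\tau) 1_n)$ are held fixed.

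First I would recall the two requirements stated at the end of the previous subsection: a basis $B \subset \{1,\cdots,2n\}$ with $|B| = n-p$ is optimal for (\ref{canonical2}) precisely when (a) $c_{-B} - c_B \tilde A_B^{-1} \tilde A_{-B} \geqslant 0_{n+p}$ (optimality) and (b) $\tilde A_B^{-1} \tilde b(y) \geqslant 0_{n-p}$ (feasibility), in which case the optimal solution is $\tilde x(y) = \Pi_B \tilde A_B^{-1} \tilde b(y)$. Since the simplex algorithm returned $B_{y^1}$ as an optimal basis at $y = y^1$, condition (a) holds for $B = B_{y^1}$. I would then observe that (a) involves only $c$ and $\tilde A$, neither of which depends on $y$; hence (a) continues to hold verbatim for every $y$. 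The only requirement that can fail as $y$ varies is the feasibility condition (b). Restricting attention to $y \in \mathcal{Y}^1$ secures (b) by the very definition of $\mathcal{Y}^1$, namely $\tilde A_{B_{y^1}}^{-1} \tilde b(y) \geqslant 0_{n-p}$. With both (a) and (b) in force for the basis $B_{y^1}$, that basis is optimal at $y$, and the associated optimal solution is exactly $\tilde x(y) = \Pi_{B_{y^1}} \tilde A_{B_{y^1}}^{-1} \tilde b(y)$, which is the claimed formula.

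To verify the final assertion $y^1 \in \mathcal{Y}^1$, I would note that feasibility of the simplex solution at $y^1$ says precisely that $\tilde A_{B_{y^1}}^{-1} \tilde b(y^1)$ equals the basic part $\tilde x(y^1)_{B_{y^1}} \geqslant 0_{n-p}$, so $y^1$ satisfies the defining inequality of $\mathcal{Y}^1$, while $y^1$ was chosen in the box $\times_{i=1}^n [y_{Li}, y_{Ui}]$ by construction.

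I do not expect a genuine obstacle here, since this is parametric right-hand-side sensitivity analysis; the one point worth stating carefully is that the basis matrix $\tilde A_{B_{y^1}}$ remains invertible throughout, which is immediate because its invertibility depends only on $\tilde A$ and on the index set $B_{y^1}$, not on $y$. The main subtlety is therefore purely bookkeeping: confirming that nothing in the optimality certificate (a) is perturbed by moving $y$, so that feasibility is the sole binding constraint and is captured entirely by the region $\mathcal{Y}^1$.
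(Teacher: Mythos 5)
Your proposal is correct and follows essentially the same route as the paper's proof: both arguments rest on the observation that the reduced-cost (optimality) certificate $c_{-B_{y^1}} - c_{B_{y^1}} \tilde A_{B_{y^1}}^{-1} \tilde A_{-B_{y^1}} \geqslant 0_{n+p}$ involves only $c$ and $\tilde A$, which do not vary with $y$, so that only basic-variable feasibility $\tilde A_{B_{y^1}}^{-1} \tilde b(y) \geqslant 0_{n-p}$ can fail as $y$ moves, and this is exactly what membership in $\mathcal{Y}^1$ guarantees. Your explicit verification that $y^1 \in \mathcal{Y}^1$ (from feasibility of the simplex solution at $y^1$) is a small but welcome addition that the paper's proof leaves implicit.
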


\begin{proof}
We have $c_{-B_{y^1}} - c_{B_{y^1}} \tilde A_{B_{y^1}}^{-1} \tilde A_{-B_{y^1}} \geqslant 0_{n+p}$ by the definition of $B_{y^1}$ and $-B_{y^1}$ as the sets of basic and non-basic indices, respectively, at the solution to (\ref{canonical2}) with $y=y^1$.
Notice that $c$ does not depend on $y$ in (\ref{canonical2}).
Hence, any feasible vertex $\tilde x \in \mathbb{R}^{2n}_+$ of the constraint set having non-zero elements only for those indices in $B(y^1)$ is optimal for (\ref{canonical2}).
Consider (\ref{canonical2}) with $y \in \times_{i=1}^n [y_{Li},y_{Ui}]$.
A vertex $\tilde x$ having non-zero elements only for those indices in $B(y^1)$ is written as $\tilde x = \Pi_{B_{y^1}} \tilde A_{B_{y^1}}^{-1} b(y)$.
It is feasible if $\tilde x = \Pi_{B_{y^1}} \tilde A_{B_{y^1}}^{-1} b(y) \geqslant 0_{2n}$, which is true if and only if $\tilde A_{B_{y^1}}^{-1} b(y) \geqslant 0_{n-p}$.
Therefore, a solution to (\ref{canonical2}) is given by
$\tilde x(y) = \Pi_{B_{y^1}} \tilde A^{-1}_{B_{y^1}} \tilde b(y)$
for all $y \in \times_{i=1}^n [y_{Li}, y_{Ui}]$ such that $\tilde A^{-1}_{B_{y^1}} \tilde b(y) \geqslant 0_{n-p}$.
\end{proof}

In light of this proposition, we propose the following procedure.
For any $y \in \mathcal{Y}^1$, let 
\begin{eqnarray*}
u_p(y) &=& \left(\left( \Pi_{B_{y^1}} \tilde A^{-1}_{B_{y^1}} \tilde b(y) \right)_1, \cdots, \left( \Pi_{B_{y^1}} \tilde A^{-1}_{B_{y^1}} \tilde b(y) \right)_p\right)^\prime
\\
v_p(y) &=& \left(\left( \Pi_{B_{y^1}} \tilde A^{-1}_{B_{y^1}} \tilde b(y) \right)_{n+1}, \cdots, \left( \Pi_{B_{y^1}} \tilde A^{-1}_{B_{y^1}} \tilde b(y) \right)_{n+p}\right)^\prime
\end{eqnarray*}
be two $p$-dimensional subvectors of the solution $\Pi_{B_{y^1}} \tilde A^{-1}_{B_{y^1}} \tilde b(y)$.
We can then directly compute the estimate of $\beta$ corresponding to this $y \in \mathcal{Y}^1$ by
$\beta(y) = X_{p}^{-1}\left( y_{p}-u_{p}(y)-v_{p}(y)\right)$.
Thus, we construct the subset estimate 
\begin{equation*}
\hat\Theta_0^\ast(\tau;y^1) = \left\{ X_{p}^{-1}\left( y_{p} - \left(\begin{array}{c}\left( \Pi_{B_{y^1}} \tilde A^{-1}_{B_{y^1}} \tilde b(y) \right)_1 \\\vdots\\ \left( \Pi_{B_{y^1}} \tilde A^{-1}_{B_{y^1}} \tilde b(y) \right)_p \end{array}\right) - \left(\begin{array}{c}\left( \Pi_{B_{y^1}} \tilde A^{-1}_{B_{y^1}} \tilde b(y) \right)_{n+1} \\\vdots\\ \left( \Pi_{B_{y^1}} \tilde A^{-1}_{B_{y^1}} \tilde b(y) \right)_{n+p} \end{array}\right) \right) : y \in \mathcal{Y}^1 \right\}.
\end{equation*}
Since this subset estimate is an image of $\mathcal{Y}^1$ through a simple linear transformation, it conveniently circumvents the need to solve the optimization problem for each $y \in \mathcal{Y}^1$.
Once this subset estimate has been computed, pick $y^2 \in \times_{i=1}^n [y_{Li},y_{Ui}] \backslash \mathcal{Y}^1$, use the simplex algorithm to get the set $B(y^2)$ of basic indices under $y^2$, and obtain the resultant subset estimate $\hat\Theta_0^\ast(\tau;y^2)$.
This is followed by the third step where $y^3 \in \times_{i=1}^n [y_{Li},y_{Ui}] \backslash \left( \cup_{k=1}^2 \mathcal{Y}^k \right)$ produces the subset estimate $\hat\Theta_0^\ast(\tau;y^3)$, and so on.
Repeat this process to obtain the set estimate $\hat\Theta_0^\ast(\tau) = \cup_{k=1}^K \hat\Theta_0^\ast(\tau;y^k)$ for $K$ steps until we exhaust $\times_{i=1}^n [y_{Li},y_{Ui}] = \prod_{k=1}^K \mathcal{Y}^k$.

\subsection{Geometric Properties of the Set of Best Linear Predictors}\label{sec:geometric_blp}

In this section, we provide some geometric properties of the set of best linear predictors proposed in Section \ref{sec:best_linear}.
It is shown that the set of connected, and therefore a projection of the set to each coordinate is interval-valued.
To show this property, we go through several auxiliary lemmas.
For the sake of rigorous proofs, we now formally define categorical sets and interval-valued sets below.

\begin{definition}\label{def:exclusive}
A random set $Y: \Omega \rightarrow \mathcal{K}(\mathbb{R})$ is categorical if
we have either
$Y(\omega) = Y(\omega^\prime)$
or
$Y(\omega) \cap Y(\omega^\prime) = \emptyset$
for all pairs $\omega,\omega^\prime \in \Omega$
\end{definition}

\begin{definition}\label{def:interval}
A random set $Y: \Omega \rightarrow \mathcal{K}(\mathbb{R})$ is interval-valued if
$Y(\omega)$ is an interval
for all $\omega \in \Omega$.
\end{definition}

For a random set $Y$, define the restricted set of selections
\begin{equation*}
Sel_C(Y) = \left\{ y \in Sel(Y) : F_y \text{ is continuous and strictly increasing on $F_y^{-1}((0,1))$}\right\}.
\end{equation*}
We first state the following auxiliary lemma of CDF equivalence for interval-valued categorical random sets.

\begin{lemma}\label{lemma:categorical}
Let $y_0, y_1 \in Sel_C(Y)$ be two selections from an interval-valued categorical random set (cf. Definitions \ref{def:exclusive} and \ref{def:interval}).
For any $\omega \in \Omega$, we have $F_{y_0}(Y(\omega)) = F_{y_1}(Y(\omega)).$
\end{lemma}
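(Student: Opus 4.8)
The plan is to show that, for a categorical interval-valued random set, the values that $F_y$ takes at the two endpoints of any interval value $Y(\omega)$ are pinned down by the law of $Y$ alone, independently of which selection $y \in Sel_C(Y)$ we use; the desired equality of images then follows from continuity of each $F_y$. First I would fix $\omega \in \Omega$ and write $Y(\omega) = [c,d]$, permitting $c = -\infty$ or $d = +\infty$. The key structural observation is that, because $Y$ is categorical and interval-valued, every other value $Y(\omega')$ is either equal to $[c,d]$ or disjoint from it, and a closed interval disjoint from $[c,d]$ must lie entirely in $(-\infty,c)$ or entirely in $(d,+\infty)$ — no value can straddle an endpoint of $[c,d]$. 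This is precisely the role of the categorical hypothesis; for overlapping intervals the argument below fails, since a straddling value would make the endpoint mass depend on the selection.

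Next, using $y(\omega') \in Y(\omega')$ $\mathbf{P}$-a.s., I would translate this separation into statements about $y$. On the event $\{\sup Y < c\}$ we have $y < c$; on $\{Y = [c,d]\}$ we have $c \le y \le d$; and on $\{\inf Y > d\}$ we have $y > d$. Hence, up to null sets, $\{y \le d\} = \{\sup Y \le d\}$ and $\{y < c\} = \{\sup Y < c\}$, which give $F_y(d) = \mathbf{P}(\sup Y \le d) = \tilde C_Y(d)$ and $\lim_{t \uparrow c} F_y(t) = \mathbf{P}(\sup Y < c) = \tilde C_Y(c-)$. Since $y \in Sel_C(Y)$ has a continuous distribution function, there is no atom at $c$, so $F_y(c) = \tilde C_Y(c-)$. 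Both endpoint values are thus expressed through the cumulative containment functional of $Y$ and do not depend on the chosen selection.

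Finally, since each $F_y$ is continuous and non-decreasing, the image of $[c,d]$ under $F_y$ is exactly the interval $[F_y(c), F_y(d)] = [\tilde C_Y(c-), \tilde C_Y(d)]$, with the obvious half-open or limiting modifications when $c = -\infty$ or $d = +\infty$ and the degenerate collapse to a single point when $c = d$. As the right-hand side depends only on the law of $Y$, I would conclude $F_{y_0}(Y(\omega)) = [\tilde C_Y(c-), \tilde C_Y(d)] = F_{y_1}(Y(\omega))$, as claimed.

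I expect the main obstacle to be the careful handling of the lower endpoint $c$: one must invoke continuity of $F_y$ both to discard a possible atom at $c$ and to identify $\mathbf{P}(y < c)$ with the left limit $\tilde C_Y(c-)$, and one must check that degenerate (singleton) values and unbounded interval values are treated uniformly by the same endpoint formula. The straddling/separation step is conceptually the crux, but it is essentially immediate once the categorical hypothesis is invoked.
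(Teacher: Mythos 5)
Your proof is sound where the lemma is actually true, and it takes a genuinely different route from the paper's. The paper argues by contradiction: it picks $\tau \in F_{y_0}(Y(\omega))$, supposes $\tau \notin F_{y_1}(Y(\omega))$, uses monotonicity of $F_{y_1}$ to place $\tau$ strictly below (or above) every value of $F_{y_1}$ on $Y(\omega)$, and then squeezes $\tau$ against $\mathbf{P}(\Omega_{L(\omega)})$, where $\Omega_{L(\omega)}=\{\omega' : \sup Y(\omega') \le \inf Y(\omega)\}$, via a monotone limit along $\zeta_n \downarrow \inf Y(\omega)$, discarding the atom $\{y_1=\inf Y(\omega)\}$ by continuity. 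You instead compute both images outright: the no-straddling trichotomy $\Omega=\{\sup Y<c\}\cup\{Y=[c,d]\}\cup\{\inf Y>d\}$ (up to null sets once the a.s.\ selection property is imposed) pins down the endpoint values $F_y(d)=\tilde C_Y(d)$ and $F_y(c)=\tilde C_Y(c-)$ for \emph{every} $y\in Sel_C(Y)$, and continuity plus monotonicity identify the image as the interval between these two selection-free numbers. The structural fact exploited is the same in both proofs --- your no-straddling observation is exactly what makes the paper's $\Omega_{L(\omega)}$ bounds work --- but your version is constructive and delivers the explicit formula $F_y(Y(\omega))=\left[\tilde C_Y(c-),\tilde C_Y(d)\right]$, which the paper's contradiction argument never exhibits; it also avoids the paper's slightly careless passage from the pointwise strict inequalities $\tau<\mathbf{P}(\Omega_{y_1}^{\zeta_n})$ to a strict inequality in the limit.

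The genuine gap is your treatment of unbounded values, which the lemma as stated does allow ($\mathcal{K}(\mathbb{R})$ consists of closed, not compact, sets, and Definition \ref{def:interval} imposes no boundedness). When $d=+\infty$, the image $F_y([c,\infty))$ equals $[F_y(c),1]$ or $[F_y(c),1)$ according to whether $F_y$ attains the value $1$ at a finite point, and this is \emph{not} determined by the law of $Y$: take $Y(\omega)=[0,\infty)$ for every $\omega$ (trivially categorical and interval-valued), let $y_0$ be uniform on $(0,1)$ and $y_1$ exponential; both belong to $Sel_C(Y)$, yet $F_{y_0}(Y(\omega))=[0,1]$ while $F_{y_1}(Y(\omega))=[0,1)$. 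A symmetric failure occurs at the lower end when $c=-\infty$, where attainment of the value $0$ depends on the selection. So your claim that the unbounded case follows with ``obvious half-open or limiting modifications'' depending only on the law of $Y$ is false; your endpoint formula --- and indeed the lemma itself --- is valid only when both endpoints of $Y(\omega)$ are finite. To be fair, the paper's own proof has the same blind spot: its ``without loss of generality'' reduction and its contradiction step implicitly require $\inf Y(\omega)$ (respectively $\sup Y(\omega)$) to be finite, and the issue is harmless in the paper's downstream application, where Condition \ref{condition:selection}(ii) imposes compact support. The clean repair on your side is simply to assume $Y(\omega)$ compact (or to state the lemma for finite endpoints), after which your argument is complete as written.
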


\begin{proof}
Define the set $\Omega_{L(\omega)} = \{\omega' \in \Omega : \sup Y(\omega') \leq \inf Y(\omega)\}$.
Let $\tau \in F_{y_0}(Y(\Omega))$.
By the strict increase of $F_{y_0}$ for $y_0 \in Sel_C(Y)$, we can write
$
\tau = Pr(\{\omega' \in \Omega : y_0(\omega') \leq F_{y_0}^{-1}(\tau)\})
$
where $F_{y_0}^{-1}(\tau) \in Y(\omega)$ by the definition of $\tau$.
Since $y_0 \in Sel_C(Y)$ and $Y$ is an interval-valued categorical random set, we the obtain
$
\tau = Pr(\{\omega' \in \Omega : y_0(\omega') \leq F_{y_0}^{-1}(\tau)\}) \geq Pr(\Omega_{L(\omega)})
$
by the monotone property of probability measures.

Assume by way of contradition that $\tau \not\in F_{y_1}(Y(\omega))$.
Since $F_{y_1}$ is increasing and $Y$ is interval-valued, this implies either $\tau < F_{y_1}(\zeta)$ for all $\zeta \in Y(\omega)$ or $\tau > F_{y_1}(\zeta)$ for all $\zeta \in Y(\omega)$.
Without loss of generality, we consider the former case, which can be rewritten as
$
\tau < Pr(\Omega_{y_1}^{\zeta})
$
for all $\zeta \in Y(\omega)$, where $\Omega_{y_1}^{\zeta} = \{\omega' \in \Omega : y_1(\omega') \leq \zeta\}$ for a short-hand notation.
Consider a decreasing sequence $\{\zeta_n\}_{n=1}^\infty \in Y(\omega)$ such that $\zeta_n \rightarrow \inf Y(\omega)$.
Since $y_1 \in Sel_C(Y)$, we have $\cap_{n=1}^\infty \Omega_{y_1}^{\zeta_n} = \Omega_{y_1}^{\inf Y(\omega)} \subset \Omega_{L(\omega)} \cup \{\omega' \in \Omega : y_1(\omega') = \inf Y(\omega)\}$
where the last set has a zero probability measure by the continuity of $y_1 \in Sel_C(Y)$.
Apply the continuity theorem of probability measures to the decreasing sequence $\{\Omega_{y_1}^{\zeta_n}\}_{n=1}^\infty$, we obtain
$
\tau < \lim_{n \rightarrow \infty} Pr(\Omega_{y_1}^{\zeta_n}) \leq Pr(\Omega_{L(\omega)}).
$
Combining this result with the conclusion from the last paragraph, we obtain
$
\tau \geq Pr(\Omega_{L(\omega)}) > \tau,
$
a contradiction.
Similarly, the case of $\tau > F_{y_1}(\zeta)$ for all $\zeta \in Y(\omega)$ leads to a contradiction.
Therefore, $\tau \in F_{y_1}(Y(\omega))$ holds.

A symmetric argument by interchanging the roles of $y_0$ and $y_1$ shows that $\tau \in F_{y_1}(Y(\omega))$ implies $\tau \in F_{y_0}(Y(\omega))$.
Therefore, $F_{y_0}(Y(\omega)) = F_{y_1}(Y(\omega))$ follows.
\end{proof}

\begin{lemma}\label{lemma:convex_combination}
Let $y_0, y_1 \in Sel_C(Y)$ be two selections from an interval-valued categorical random set $Y$ (cf. Definitions \ref{def:exclusive} and \ref{def:interval}).
Then, for any $\lambda \in [0,1]$, there exists $y_\lambda \in Sel_C(Y)$ such that $F_{y_\lambda} = (1-\lambda) \cdot F_{y_0} + \lambda \cdot F_{y_1}$.
\end{lemma}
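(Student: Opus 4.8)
The plan is to construct $y_\lambda$ explicitly via the quantile transform of the target mixture distribution, coupled to $y_0$ through the probability integral transform. Set $G = (1-\lambda) F_{y_0} + \lambda F_{y_1}$; as a convex combination of two continuous cumulative distribution functions, $G$ is itself a continuous cumulative distribution function, and the goal is to exhibit a selection $y_\lambda \in Sel_C(Y)$ with $F_{y_\lambda} = G$. Let $G^{-1}(u) = \inf\{t : G(t) \geq u\}$ denote its quantile function, define $U = F_{y_0}(y_0)$, and set $y_\lambda = G^{-1}(U)$. Since $F_{y_0}$ is continuous for $y_0 \in Sel_C(Y)$, the probability integral transform gives $U \sim \text{Uniform}(0,1)$, and the quantile transform then yields $F_{y_\lambda}(t) = \mathbf{P}(U \leq G(t)) = G(t)$, so that $F_{y_\lambda} = G$ as required. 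Measurability of $y_\lambda$ is immediate, being the composition of the Borel map $G^{-1}$ with the measurable $U$.

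First I would record the cell structure implied by the categorical hypothesis: up to null sets, the range of $Y$ decomposes $\Omega$ into cells, with $Y(\omega) = I$ a closed interval depending only on the cell. The crucial input is Lemma \ref{lemma:categorical}, which guarantees $F_{y_0}(I) = F_{y_1}(I)$ for each cell $I$; writing this common image as $[\alpha_I,\beta_I]$, both $F_{y_0}$ and $F_{y_1}$ restrict to continuous strictly increasing bijections of $I$ onto $[\alpha_I,\beta_I]$ and, in particular, agree at the endpoints of $I$. It follows that on $I$ the convex combination $G$ also sends $\inf I \mapsto \alpha_I$ and $\sup I \mapsto \beta_I$ and, being a convex combination of strictly increasing continuous functions, is itself a continuous strictly increasing bijection of $I$ onto $[\alpha_I,\beta_I]$; hence $G^{-1}([\alpha_I,\beta_I]) = I$.

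With this in hand the selection property follows directly. For almost every $\omega$, writing $I = Y(\omega)$, we have $y_0(\omega) \in I$, whence $U(\omega) = F_{y_0}(y_0(\omega)) \in F_{y_0}(I) = [\alpha_I,\beta_I]$; applying the cellwise inverse gives $y_\lambda(\omega) = G^{-1}(U(\omega)) \in G^{-1}([\alpha_I,\beta_I]) = I = Y(\omega)$, so $y_\lambda \in Sel(Y)$. To upgrade this to $y_\lambda \in Sel_C(Y)$, I would note that $G$ is continuous as a convex combination of continuous functions, and that it is strictly increasing on $G^{-1}((0,1))$: on that set at least one summand of $G$ is strictly increasing, since $y_0, y_1 \in Sel_C(Y)$ have a common strict-increase set by Lemma \ref{lemma:categorical}. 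This is exactly the requirement defining $Sel_C(Y)$.

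The step I expect to be the main obstacle is the verification of the selection property — specifically, pinning down that $G$ carries each cell $I$ exactly onto the quantile interval $[\alpha_I,\beta_I]$, so that $G^{-1}$ returns $U(\omega)$ to the correct cell. This is where Lemma \ref{lemma:categorical} is indispensable: without the matching of endpoints $F_{y_0}(\inf I) = F_{y_1}(\inf I)$ and $F_{y_0}(\sup I) = F_{y_1}(\sup I)$, the mixture $G$ could map $I$ onto a strictly larger interval and $y_\lambda(\omega)$ could escape $Y(\omega)$. The remaining care is bookkeeping around interval endpoints, possibly unbounded cells, and the null sets on which the cell decomposition may fail, none of which affects the construction.
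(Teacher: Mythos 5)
Your construction is precisely the paper's own: the paper defines $y_\lambda(\omega) = \inf\left\{ t \in \mathbb{R} : F_{y_0}(y_0(\omega)) \leq (1-\lambda) F_{y_0}(t) + \lambda F_{y_1}(t) \right\}$, which is exactly your $y_\lambda = G^{-1}\circ F_{y_0}\circ y_0$, establishes $F_{y_\lambda} = (1-\lambda)F_{y_0} + \lambda F_{y_1}$ by the same probability integral transform argument, and invokes Lemma \ref{lemma:categorical} in the same way to conclude $y_\lambda(\omega) \in Y(\omega)$ (the paper phrases your cell-wise step as $F\circ y_\lambda(\omega) = F_{y_0}(y_0(\omega)) \in F_{y_0}(Y(\omega)) = F_{y_1}(Y(\omega))$ and hence in $F(Y(\omega))$). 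Your cell-by-cell bookkeeping and your check that $G$ is strictly increasing where needed are just more explicit renderings of claims the paper states tersely (it asserts that $F$ is ``continuous and strictly increasing on its support'' because $y_0,y_1 \in Sel_C(Y)$), so the two proofs coincide in substance.
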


\begin{proof}
Define $y_\lambda: \Omega \rightarrow \mathbb{R}$ by $y_\lambda(\omega) = \inf\left\{ y \in \mathbb{R} : F_{y_0}(y_0(\omega)) \leqslant (1-\lambda) \cdot F_{y_0}(y) + \lambda \cdot F_{y_1}(y) \right\}$.
First, we show that $y_\lambda$ is measurable.
$F = (1-\lambda) \cdot F_{y_0} + \lambda \cdot F_{y_1}$ is continuous and strictly increasing on its support because $y_0, y_1 \in Sel_C(Y)$.
Therefore, $F$ has a strictly increasing inverse $F^{-1}$ by \citet[pp. 266]{Pfeiffer90}, and it follows that $y_\lambda(\omega) = F^{-1} \circ F_{y_0} \circ y_0(\omega)$.
Since $F^{-1}$ and $F_{y_0}$ are continuous and $y_0$ is measurable, it follows that $y_\lambda$ is measurable.

Second, we show that $y_\lambda$ is a selection of $Y$. 
Let $\omega \in \Omega$.
Because $y_0 \in Sel_C(Y) \subset Sel(Y)$, we have $y_0(\omega) \in Y(\omega)$.
Thus, we obtain $F \circ y_\lambda(\omega) = F_{y_0} \circ y_0(\omega) \in F_{y_0} (Y(\omega)) = F_{y_1}(Y(\omega))$, where the first equality is due to the definition of $y_\lambda$ and the last equality is due to Lemma \ref{lemma:categorical}.
Taking a convex combination yields $F \circ y_\lambda(\omega) \in (1-\lambda) F_{y_0}(Y(\omega)) + \lambda F_{Y_1}(Y(\omega)) = F(Y(\omega))$.
Therefore, it follows that $y_\lambda(\omega) \in Y(\omega)$, showing that $y_\lambda$ is a selection of $Y$.

Finally, we show that $F_{y_\lambda} = (1-\lambda) \cdot F_{y_0} + \lambda \cdot F_{y_1}$.
This claim follows from the following chain of equalities:
$
F_{y_\lambda}(y)
=
P\left(\left\{\omega \in \Omega : y_\lambda(\omega) \leq y \right\}\right)
=
P\left(\left\{\omega \in \Omega : F^{-1} \circ F_{y_0}(y_0(\omega)) \leq y \right\}\right)
=
P\left(\left\{\omega \in \Omega : F_{y_0}(y_0(\omega)) \leq (1-\lambda) \cdot F_{y_0}(y) + \lambda \cdot F_{y_1}(y) \right\}\right)
=
(1-\lambda) \cdot F_{y_0}(y) + \lambda \cdot F_{y_1}(y),
$
where the first equality is by the definition of the cdf $F_{y_\lambda}$, the second equality is due to the definition of $y_\lambda$, the third equality is by the short-hand notation for $F = (1-\lambda) \cdot F_{y_0} + \lambda \cdot F_{y_1}$, and the last equality uses $F_{y_0}(y_0) \sim U(0,1)$ by the probability integral transform.
Therefore, we have $F_{y_\lambda} = (1-\lambda) \cdot F_{y_0} + \lambda \cdot F_{y_1}$.
\end{proof}

We now consider the joint random set $(x,Y): \Omega \rightarrow \mathbb{R}^p \times \mathcal{K}(\mathbb{R})$ and extend the restricted set $Sel_C(Y)$ of selections to 
\begin{eqnarray*}
Sel_C(x,Y) = \left\{ (x,y) \in Sel(x,Y) : \text{$F_{y \mid x}( \ \cdot \ \mid \xi)$ is continuous and strictly increasing} \right.
\\
\left. \text{ on $F_{y \mid x}^{-1}((0,1) \mid \xi)$ for each $\xi \in \text{Supp}(x)$}\right\}.
\end{eqnarray*}
Similar lines of a proof to those of Lemma \ref{lemma:convex_combination} yield the following extension to Lemma \ref{lemma:convex_combination}.

\begin{lemma}\label{lemma:convex_combination_x}
Suppose that $x$ has a countable support.
Let $(x,y_0), (x,y_1) \in Sel_C(x,Y)$ be two selections from an interval-valued categorical random set $Y$ (cf. Definitions \ref{def:exclusive} and \ref{def:interval}).
Then, for any $\lambda \in [0,1]$, there exists $(x,y_\lambda) \in Sel_C(x,Y)$ such that $F_{y_\lambda \mid x} = (1-\lambda) \cdot F_{y_0 \mid x} + \lambda \cdot F_{y_1 \mid x}$.
\end{lemma}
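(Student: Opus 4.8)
The plan is to reduce the conditional statement to a fiberwise application of the unconditional construction in the proof of Lemma \ref{lemma:convex_combination}, and then to exploit the countability of $\text{Supp}(x)$ to recover global measurability. The key observation is that, conditionally on each $\xi \in \text{Supp}(x)$, the conditional distributions $F_{y_0 \mid x}(\ \cdot \ \mid \xi)$ and $F_{y_1 \mid x}(\ \cdot \ \mid \xi)$ play exactly the roles of the unconditional $F_{y_0}$ and $F_{y_1}$ in Lemma \ref{lemma:convex_combination}, so that nearly every step there can be replayed fiber by fiber using the regular conditional probability $\mathbf{P}(\ \cdot \ \mid x = \xi)$, whose existence is assumed in Section \ref{sec:quantile_regression:id}.

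First I would establish the conditional analog of the CDF-equivalence Lemma \ref{lemma:categorical}: for each fixed $\xi \in \text{Supp}(x)$ and each $\omega$ with $x(\omega) = \xi$, one has $F_{y_0 \mid x}(Y(\omega) \mid \xi) = F_{y_1 \mid x}(Y(\omega) \mid \xi)$. Membership of both selections in $Sel_C(x,Y)$ guarantees that $F_{y_0 \mid x}(\ \cdot \ \mid \xi)$ and $F_{y_1 \mid x}(\ \cdot \ \mid \xi)$ are continuous and strictly increasing on $F_{y \mid x}^{-1}((0,1) \mid \xi)$, so the proof of Lemma \ref{lemma:categorical} carries over verbatim with $P$ replaced by $\mathbf{P}(\ \cdot \ \mid x = \xi)$.

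Next, writing $G(\ \cdot \ \mid \xi) = (1-\lambda) F_{y_0 \mid x}(\ \cdot \ \mid \xi) + \lambda F_{y_1 \mid x}(\ \cdot \ \mid \xi)$ for the conditional mixture distribution, I would define $y_\lambda$ fiberwise by $y_\lambda(\omega) = G^{-1}(F_{y_0 \mid x}(y_0(\omega) \mid x(\omega)) \mid x(\omega))$, which on the fiber $x^{-1}(\{\xi\})$ equals the composition $G(\ \cdot \ \mid \xi)^{-1} \circ F_{y_0 \mid x}(\ \cdot \ \mid \xi) \circ y_0$. The selection property $y_\lambda(\omega) \in Y(\omega)$ and the target identity $F_{y_\lambda \mid x}(\ \cdot \ \mid \xi) = G(\ \cdot \ \mid \xi)$ then follow, for each $\xi$, by repeating the corresponding two steps of the proof of Lemma \ref{lemma:convex_combination}: the selection property uses the conditional CDF equivalence of the previous paragraph together with convexity of $F_{y_0 \mid x}(Y(\omega) \mid \xi)$, and the distributional identity follows from the conditional probability integral transform, i.e. $F_{y_0 \mid x}(y_0(\omega) \mid x(\omega)) \sim U(0,1)$ given $x = \xi$.

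The main obstacle is the global measurability of $y_\lambda$, which is precisely what the countable support hypothesis is designed to overcome: unlike in the unconditional case, the inverse $G(\ \cdot \ \mid \xi)^{-1}$ depends on $\xi$, so one cannot simply compose $y_0$ with a single fixed Borel map. On each fiber the construction is measurable by the argument of Lemma \ref{lemma:convex_combination} — the inverse $G(\ \cdot \ \mid \xi)^{-1}$ is strictly increasing, and hence Borel, by \citet[pp. 266]{Pfeiffer90}, and it is composed with the continuous $F_{y_0 \mid x}(\ \cdot \ \mid \xi)$ and the measurable $y_0$. Since $\text{Supp}(x)$ is countable, the family $\{x^{-1}(\{\xi\})\}_{\xi \in \text{Supp}(x)}$ is a countable measurable partition of $\Omega$ up to a $\mathbf{P}$-null set, so $y_\lambda$ is a countable gluing of fiberwise measurable maps and is therefore globally measurable. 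Assembling measurability, the selection property, and the fiberwise distributional identity yields $(x,y_\lambda) \in Sel_C(x,Y)$ with $F_{y_\lambda \mid x} = (1-\lambda) \cdot F_{y_0 \mid x} + \lambda \cdot F_{y_1 \mid x}$, as claimed.
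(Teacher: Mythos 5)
Your proposal is correct and follows essentially the same route as the paper's own proof: both work fiber by fiber on the sets $x^{-1}(\{\xi\})$, applying the unconditional construction of Lemma \ref{lemma:convex_combination} (with its ingredient Lemma \ref{lemma:categorical}) under the conditional distribution for each $\xi \in \text{Supp}(x)$, and then use the countability of the support to glue the fiberwise selections into a globally measurable $y_\lambda$. The only difference is presentational — the paper invokes Lemma \ref{lemma:convex_combination} as a black box on the restrictions $y_0\mid_{\Omega_\xi}$, $y_1\mid_{\Omega_\xi}$, $Y\mid_{\Omega_\xi}$, whereas you unfold its construction explicitly via $G(\ \cdot \mid \xi)^{-1} \circ F_{y_0 \mid x}(\ \cdot \mid \xi) \circ y_0$ — which does not change the substance of the argument.
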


\begin{proof}
For each $\xi \in \text{Image}(x)$, let $\Omega_\xi = x^{-1}(\{\xi\}) \subset \Omega$.
Consider the restrictions $y_0 \mid_{\Omega_\xi}$, $y_1 \mid_{\Omega_\xi}$ and $Y \mid_{\Omega_\xi}$ of $y_0$, $y_1$ and $Y$, respectively, to the domain $\Omega_\xi$.
Because $\Omega_\xi = x^{-1}(\{\xi\})$ is a measurable subset of $\Omega$ due to the measurability of $x$, the restrictions $y_0 \mid_{\Omega_\xi}$ and $y_1 \mid_{\Omega_\xi}$ are also measurable functions.
Furthermore, $y_0 \mid_{\Omega_\xi}$ and $y_1 \mid_{\Omega_\xi}$ are selection of $Y \mid_{\Omega_\xi}$ because they are restricted to the identical domain $\Omega_\xi$.
Therefore, by Lemma \ref{lemma:convex_combination}, there exists a selection $y_{\xi,\lambda}: \Omega_\xi \rightarrow \mathbb{R}$ of $Y \mid_{\Omega_\xi}$ such that $F_{y_{\xi,\lambda}} = (1-\lambda) \cdot F_{y_0 \mid x=\xi} + \lambda \cdot F_{y_1 \mid x=\xi}$.

Define the function $y_\lambda: \Omega \rightarrow \mathbb{R}$ by the rule of assignment $y_\lambda(\omega) = y_{x(\omega),\lambda} (\omega)$.
Further, define the function $(x,y_\lambda): \Omega \rightarrow \mathbb{R}^{p+1}$ by the rule of assignment $(x,y_\lambda)(\omega) = (x(\omega), y_\lambda(\omega))$.
We have $(x,y_\lambda)(\omega) = (x(\omega),y_\lambda(\omega)) = (x(\omega),y_{x(\omega),\lambda}(\omega)) \in \{x(\omega)\} \times Y(\omega)$ because the previous paragraph concluded that $y_{\xi,\lambda}$ is a selection of $Y \mid_{\Omega_\xi}$ for each $\xi \in \text{Image}(x)$.
Because $y_{\xi,\lambda}$ is a measurable function for each $\xi \in \text{Image}(x)$ from the previous paragraph and $\text{Image}(x)$ is countable, this $y_\lambda$ is a measurable function.
But then, $(x,y_\lambda)$ is also a measurable function.
It also follows from the conclusion of the previous paragraph that $F_{y_\lambda \mid x=\xi} = F_{y_{\xi,\lambda}} = (1-\lambda) \cdot F_{y_0 \mid x=\xi} + \lambda \cdot F_{y_1 \mid x=\xi}$.
These arguments together show that the desired conclusion holds.
\end{proof}

The condition that the support of $x$ is countable is restrictive, but is needed in our proof. 
This condition guarantees that $y_\lambda$ is a measurable function.
Without this condition, it is not clear if the same conclusion remains due to the fact that a sigma field is closed only under countable unions.

Now, for a random vector $(x,y)$, we consider the best linear predictor $\beta_\tau$ defined by
\begin{equation*}
\beta_\tau = \arg\min_{\beta \in B} E[\rho_\tau(y - x^\prime \beta)]
\end{equation*}
where $\rho_\tau(u) = (\tau - 1[ u \leq 0]) \cdot u$ and $B \subset \mathbb{R}^p$ is a convex and compact set.

\begin{lemma}\label{lemma:convex}
Suppose that $x$ has a countable support.
Let $(x,y_0), (x,y_1) \in Sel_C(x,Y)$ be two selections from an interval-valued categorical random set $Y$ (cf. Definitions \ref{def:exclusive} and \ref{def:interval}).
If $E[\rho_\tau(y_0 - x^\prime \beta)]$ and $E[\rho_\tau(y_1 - x^\prime \beta)]$ are strictly convex in $\beta$, then, for any $\lambda \in [0,1]$, there exists $(x,y_\lambda) \in Sel_C(x,Y)$ such that $E[\rho_\tau(y_\lambda - x^\prime \beta)] = (1-\lambda) E[\rho_\tau(y_0 - x^\prime \beta)] + \lambda E[\rho_\tau(y_1 - x^\prime \beta)]$ is strictly convex.
\end{lemma}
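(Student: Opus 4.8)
The plan is to reduce everything to the already-proved Lemma~\ref{lemma:convex_combination_x} and to the observation that the check-loss risk is a \emph{linear} functional of the conditional distribution $F_{y \mid x}$ when the marginal of $x$ is held fixed. Once both ingredients are in place, the convex-combination identity transfers from distribution functions to risks essentially mechanically, and strict convexity is then immediate.

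First I would apply Lemma~\ref{lemma:convex_combination_x} verbatim. Since $x$ has countable support, $Y$ is an interval-valued categorical random set, and $(x,y_0),(x,y_1) \in Sel_C(x,Y)$, the lemma produces a selection $(x,y_\lambda) \in Sel_C(x,Y)$ whose conditional CDF satisfies $F_{y_\lambda \mid x} = (1-\lambda) F_{y_0 \mid x} + \lambda F_{y_1 \mid x}$. This is the only step that uses the structural hypotheses on $Y$ and on the support of $x$; the rest of the argument concerns only the risk functional.

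Next I would express the risk in a form that is manifestly linear in $F_{y \mid x}$. By iterated expectations, $E[\rho_\tau(y - x^\prime \beta)] = \int E[\rho_\tau(y - x^\prime \beta) \mid x = \xi]\, dF_x(\xi)$, and the inner conditional expectation equals $\int \rho_\tau(\zeta - \xi^\prime \beta)\, dF_{y \mid x}(\zeta \mid \xi)$ (this is exactly the integral representation already exhibited in the proof of Theorem~\ref{prop:blp_containment}). Because $y_0$, $y_1$, and $y_\lambda$ all share the same first coordinate $x$, they have a common marginal $F_x$, so the convex combination of conditional CDFs induces the corresponding mixture of conditional laws, and integrating the fixed integrand $\rho_\tau(\zeta - \xi^\prime \beta)$ against that mixture splits additively. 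Substituting $F_{y_\lambda \mid x} = (1-\lambda) F_{y_0 \mid x} + \lambda F_{y_1 \mid x}$ and integrating over $\xi$ against $F_x$ then yields, for every $\beta$, $E[\rho_\tau(y_\lambda - x^\prime \beta)] = (1-\lambda) E[\rho_\tau(y_0 - x^\prime \beta)] + \lambda E[\rho_\tau(y_1 - x^\prime \beta)]$.

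Finally I would establish strict convexity. For $\lambda \in (0,1)$ the right-hand side is a strictly positive combination of two functions of $\beta$ that are strictly convex by hypothesis, hence strictly convex; for $\lambda \in \{0,1\}$ it collapses to one of the two strictly convex summands. Thus $\beta \mapsto E[\rho_\tau(y_\lambda - x^\prime \beta)]$ is strictly convex for all $\lambda \in [0,1]$, which completes the proof. I expect the only point requiring genuine care to be the linearity-in-CDF step: one must confirm that $\int \rho_\tau(\zeta - \xi^\prime \beta)\, dF_{y \mid x}$ is finite and well defined (which is guaranteed, since the standing strict convexity of the risks for $y_0$ and $y_1$ presupposes they are real-valued, hence that $y$ is integrable) and that the shared marginal $F_x$ legitimately factors the convex combination through the outer integral.
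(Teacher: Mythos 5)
Your proposal is correct and follows essentially the same route as the paper's own proof: invoke Lemma \ref{lemma:convex_combination_x} to obtain $(x,y_\lambda) \in Sel_C(x,Y)$ with $F_{y_\lambda \mid x} = (1-\lambda) F_{y_0 \mid x} + \lambda F_{y_1 \mid x}$, transfer this convex combination to the risks via the linearity of $E[\rho_\tau(y - x^\prime \beta)]$ in the conditional law (given the common marginal $F_x$), and conclude strict convexity from the strict convexity of the two summands. The only difference is that you spell out the linearity-in-CDF step and the integrability caveat, which the paper's one-line proof leaves implicit.
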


\begin{proof}
Lemma \ref{lemma:convex_combination_x} shows that there exists $(x,y_\lambda) \in Sel_C(x,Y)$ such that $F_{y_\lambda \mid x} = (1-\lambda) \cdot F_{y_0 \mid x} + \lambda \cdot F_{y_1 \mid x}$.
Furthermore, $F_{y_\lambda \mid x} = (1-\lambda) \cdot F_{y_0 \mid x} + \lambda \cdot F_{y_1 \mid x}$ and the strict convexity of $E[\rho_\tau(y_0 - x^\prime \beta)]$ and $E[\rho_\tau(y_1 - x^\prime \beta)]$ in $\beta$ imply that $E[\rho_\tau(y_\lambda - x^\prime \beta)] = (1-\lambda) E[\rho_\tau(y_0 - x^\prime \beta)] + \lambda E[\rho_\tau(y_1 - x^\prime \beta)]$ is strictly convex.
\end{proof}

\begin{lemma}\label{lemma:smoothness}
Suppose that $x$ has a countable support.
If $(x,y) \in Sel(x,Y)$ is compactly supported and admits a conditional density $f_{y \mid x}( \ \cdot \ \mid \xi) \in C^2$ for each $\xi \in \text{Supp}(x)$, then
$E[\rho_\tau(y - x^\prime \beta)]$ is twice continuously differentiable in $\beta$ with
\begin{eqnarray*}
\frac{\partial}{\partial \beta} E[\rho_\tau(y - x^\prime \beta)] &=& E\left[ x \rho_\tau(y - x^\prime \beta) \frac{\partial \log f_{x,y}(x,y)}{\partial y}\right]
\quad\text{and}
\\
\frac{\partial^2}{\partial \beta \partial \beta^\prime} E[\rho_\tau(y - x^\prime \beta)] &=& E\left[ x \frac{\rho_\tau(y - x^\prime \beta)}{f_{x,y}(x,y)} \frac{\partial^2 f_{x,y}(x,y)}{\partial y^2} x^\prime \right]
\end{eqnarray*}
\end{lemma}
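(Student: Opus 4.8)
The plan is to exploit the countable support of $x$ to reduce the problem to differentiating, for each atom $\xi \in \mathrm{Supp}(x)$, the one-dimensional integral $g_\xi(\beta) = \int \rho_\tau(\zeta - \xi^\prime\beta)\, f_{y\mid x}(\zeta \mid \xi)\, d\zeta$, and then to recover the claimed score-type formulas by integration by parts in $\zeta$ (where throughout a prime denotes differentiation in this first argument). Writing $E[\rho_\tau(y - x^\prime\beta)] = \sum_{\xi} \mathbf{P}(x=\xi)\, g_\xi(\beta)$ and reading $f_{x,y}(\xi,\zeta) = \mathbf{P}(x=\xi)\, f_{y\mid x}(\zeta \mid \xi)$, I note at the outset that the discrete mass factor cancels in the logarithmic derivative and in the curvature ratio, so that $\partial_y \log f_{x,y} = \partial_\zeta \log f_{y\mid x}$ and $(\partial_y^2 f_{x,y})/f_{x,y} = f_{y\mid x}''/f_{y\mid x}$; this is why the final expressions may legitimately be stated in terms of $f_{x,y}$.

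For the first derivative I would differentiate $g_\xi$ under the integral sign. Since $\rho_\tau$ is Lipschitz with a.e. derivative $\rho_\tau'(u) = \tau - 1[u<0]$, the difference quotients are bounded by $\|\xi\|$ times the Lipschitz constant, uniformly in a neighborhood of $\beta$, so dominated convergence applies (the single point $\zeta = \xi^\prime\beta$ where $\rho_\tau$ fails to be differentiable is Lebesgue-null and hence harmless); this yields $\nabla_\beta g_\xi(\beta) = -\xi \int \rho_\tau'(\zeta - \xi^\prime\beta)\, f_{y\mid x}(\zeta \mid \xi)\, d\zeta$. I then write $\rho_\tau'(\zeta - \xi^\prime\beta) = \partial_\zeta \rho_\tau(\zeta - \xi^\prime\beta)$ and integrate by parts in $\zeta$; because $(x,y)$ is compactly supported and $f_{y\mid x}(\,\cdot\mid\xi)\in C^2$ is continuous, the boundary term $[\rho_\tau f_{y\mid x}]$ vanishes, giving $\nabla_\beta g_\xi(\beta) = \xi \int \rho_\tau(\zeta - \xi^\prime\beta)\, f_{y\mid x}'(\zeta\mid\xi)\, d\zeta = \xi\, E[\rho_\tau(y - \xi^\prime\beta)\, \partial_y \log f_{x,y}(x,y)\mid x=\xi]$. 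Summing over $\xi$ against $\mathbf{P}(x=\xi)$ then produces the first displayed identity; term-by-term differentiation of the series is legitimate by the Weierstrass test, since $\|\xi\|$ is bounded on the compact support and the masses sum to one, so the differentiated series is dominated uniformly in $\beta$ by a summable majorant.

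For the Hessian I would differentiate the gradient once more in $\beta^\prime$, again under the integral via dominated convergence (now $f_{y\mid x}'(\,\cdot\mid\xi)$ is continuous with compact support, hence bounded and integrable), obtaining $-\xi\xi^\prime \int \rho_\tau'(\zeta - \xi^\prime\beta)\, f_{y\mid x}'(\zeta\mid\xi)\, d\zeta$, and integrate by parts a second time. The boundary term $[\rho_\tau f_{y\mid x}']$ again vanishes because $f_{y\mid x}'(\,\cdot\mid\xi)$ is continuous with compact support, leaving $\xi\xi^\prime \int \rho_\tau(\zeta-\xi^\prime\beta)\, f_{y\mid x}''(\zeta\mid\xi)\, d\zeta$. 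Factoring $f_{y\mid x}'' = (f_{y\mid x}''/f_{y\mid x})\, f_{y\mid x}$, expressing the integral as a conditional expectation, and summing over $\xi$ gives the second identity; continuity of this expression in $\beta$, and hence the asserted $C^2$ regularity, follows from one further application of dominated convergence using continuity of $\rho_\tau$ and of $f_{y\mid x}''(\,\cdot\mid\xi)$.

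The main obstacle, and where care is genuinely needed, is the pair of integration-by-parts steps together with the interchange of differentiation and integration: the non-differentiability of $\rho_\tau$ at its kink must be dispatched by the null-set and Lipschitz-domination argument rather than by naive differentiation, and the vanishing of both boundary terms hinges essentially on combining compact support with $C^2$ smoothness, which guarantees that $f_{y\mid x}$ and $f_{y\mid x}'$ are continuous and therefore vanish at the edges of the support. By comparison, the term-by-term handling of the countable sum is routine, relying only on the boundedness of $\|\xi\|$ on the compact support.
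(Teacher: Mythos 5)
Your proof is correct, but it takes a genuinely different route from the paper's. You differentiate under the integral sign directly --- using the Lipschitz property of $\rho_\tau$, dominated convergence, and the fact that the kink $\{\zeta = \xi^\prime\beta\}$ is Lebesgue-null --- and then transfer the derivative from $\rho_\tau$ onto the density by integration by parts, twice, with the boundary terms killed by compact support. The paper never differentiates $\rho_\tau$ at all: it truncates the check function to a function $\bar\rho_\tau \in L^1$ (using compactness of $\text{Supp}(x,y)$ and of the parameter set $B$), writes $E[\rho_\tau(y - x^\prime \beta)] = \sum_{\xi} \left(\bar\rho_\tau \ast f_{x,y}(\xi,\,\cdot\,)\right)(\xi^\prime\beta)$, and invokes the standard convolution fact that $f \in L^1$ and $g \in C^k$ imply $f \ast g \in C^k$ with $\partial^\alpha (f \ast g) = f \ast \partial^\alpha g$, so both derivatives land on the smooth density automatically after a chain rule through $q(\xi,\beta) = \xi^\prime\beta$. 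Your route is more elementary and self-contained, and it produces as a by-product the familiar first-order-condition representation $-\xi \int (\tau - 1[\zeta < \xi^\prime\beta]) f_{y \mid x}(\zeta \mid \xi)\, d\zeta$ (the same gradient used in the paper's proof of Theorem \ref{prop:blp_containment}); the cost is that the kink and the boundary terms must be handled by hand, which the convolution device sidesteps. One caveat applies to both arguments: when the (compact) support of $x$ is countably infinite, interchanging $\partial_\beta$ with $\sum_\xi$ requires a summable majorant. Your Weierstrass bound $\mathbf{P}(x=\xi)\,\|\xi\|$ settles this for the gradient, but for the Hessian the analogous bound involves $\|\partial_\zeta f_{y\mid x}(\,\cdot \mid \xi)\|_{L^1}$ (or $\|\partial^2_\zeta f_{y \mid x}(\,\cdot \mid \xi)\|_{L^1}$), which the stated assumptions do not control uniformly in $\xi$, so your claim that this step relies ``only on the boundedness of $\|\xi\|$'' is slightly too quick --- though the paper's own proof passes over the same interchange without any comment at all.
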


\begin{proof}
We first modify the check function $\rho_\tau$ by
$$
\bar\rho_\tau(u) =
\begin{cases}
\rho_\tau(u) & \text{if $u = \zeta - \xi^\prime \beta$ for some $(\xi,\zeta) \in \text{Supp}(x,y)$ and $\beta \in B$}
\\
0 & \text{otherwise}
\end{cases}
$$
With this modification, we have $\bar\rho_\tau \in L^1$ due to the compactness of $\text{Supp}(x,y)$ and $B$.
We can write the BLP objective as
\begin{eqnarray*}
E[\rho_\tau(y - x^\prime \beta)]
&=&
\sum_{\xi \in \text{Supp}(x)} \int \bar\rho_\tau(\zeta - q(\xi, \beta)) f_{x, y}(\xi, \zeta) d\zeta
\\
&=&
\sum_{\xi \in \text{Supp}(x)} \left(\bar\rho_\tau \ast f_{x,y}(\xi, \ \cdot \ ) \right) (q(\xi,\beta))
\end{eqnarray*}
where $q(x,\beta) = x^\prime \beta$ and `$\ast$' denotes the convolution operator.
Since $q$ is clearly twice continuously differentiable with respect to $\beta$ with its first and second derivatives given by $x$ and $0$, respectively, it suffices to show that $\left(\bar\rho_\tau \ast f_{x,y}(\xi, \ \cdot \ ) \right)$ is twice continuously differentiable with its first and second derivatives given by $ \bar\rho_\tau \ast \frac{\partial}{\partial y} f_{x,y}(\xi, \ \cdot \ ) $ and $ \bar\rho_\tau \ast \frac{\partial^2}{\partial y^2} f_{x,y}(\xi, \ \cdot \ ) $, respectively.
But this desired property follows from the fact that $f \in L^1$ and $g \in C^k$ implies $f \ast g \in C^k$ with $\partial^\alpha (f \ast g) = f \ast \partial^\alpha g$ for each $\alpha \in \{0,\cdots,k\}$, $\bar\rho_\tau \in L^1$, and our condition that $(x,y)$ admits $f_{x,y}(\xi, \ \cdot \ ) = f_{y \mid x}( \ \cdot \ \mid \xi) f_x(\xi) \in C^2$ for each $\xi \in \text{Supp}(x)$.
\end{proof}

We now define the set of best linear predictors by
$$
B_{I,\tau} = \left\{ \arg\min_{\beta \in B} E[\rho_\tau(y - x^\prime \beta)] : (x,y) \in Sel^\ast(x,Y)\right\}
$$
for
$$
Sel^\ast(x,Y) = \left\{ (x,y) \in Sel(x,Y) : \text{$(x,y)$ satisfies Condition \ref{condition:selection}} \right\},
$$
where the condition is given below.

\begin{condition}\label{condition:selection}
${}$\\
(i) $F_{y \mid x}( \ \cdot \ \mid \xi)$ is continuous and strictly increasing on $F_{y \mid x}^{-1}((0,1) \mid \xi)$ for each $\xi \in \text{Supp}(x)$.
\\
(ii) $(x,y)$ is compactly supported.
\\
(iii) $(x,y)$ admits a conditional density function $f_{y \mid x}( \ \cdot \ \mid \xi) \in C^2$ for each $\xi \in \text{Supp}(x)$.
\\
(iv)  $E[\rho_\tau(y - x^\prime \beta)]$ are strictly convex in $\beta$.
\end{condition}

\begin{proposition}
Suppose that $x$ has a countable support, and $Y$ is an interval-valued categorical random set $Y$ (cf. Definitions \ref{def:exclusive} and \ref{def:interval}).
If $E[\rho_\tau(y - x^\prime \beta)] = 0$ holds for some $\beta \in B$ for each selection $(x,y) \in Sel^\ast(x,Y)$,
then $B_{I,\tau}$ is connected.
In particular, the projection of $B_{I,\tau}$ to each coordinate is interval-valued.
\end{proposition}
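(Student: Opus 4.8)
The plan is to show that $B_{I,\tau}$ is path-connected; connectedness then follows, and since the coordinate projections $\mathbb{R}^p \to \mathbb{R}$ are continuous and carry connected sets to connected sets, the final assertion that each projection of $B_{I,\tau}$ is an interval is immediate, as the connected subsets of $\mathbb{R}$ are exactly the intervals.

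To prove path-connectedness, I would fix arbitrary $\beta^0, \beta^1 \in B_{I,\tau}$ and construct a continuous path between them inside $B_{I,\tau}$. By the definition of $B_{I,\tau}$ there exist selections $(x,y_0),(x,y_1) \in Sel^\ast(x,Y)$ with $\beta^j = \arg\min_{\beta \in B} E[\rho_\tau(y_j - x^\prime \beta)]$ for $j=0,1$, each minimizer being unique by Condition \ref{condition:selection}(iv). For each $\lambda \in [0,1]$, Lemma \ref{lemma:convex} supplies a selection $(x,y_\lambda) \in Sel_C(x,Y)$ whose risk obeys $E[\rho_\tau(y_\lambda - x^\prime \beta)] = (1-\lambda) E[\rho_\tau(y_0 - x^\prime \beta)] + \lambda E[\rho_\tau(y_1 - x^\prime \beta)]$ and is strictly convex in $\beta$. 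I would then verify that $(x,y_\lambda)$ actually satisfies all of Condition \ref{condition:selection}, so that $(x,y_\lambda) \in Sel^\ast(x,Y)$: part (i) is built into $Sel_C(x,Y)$, part (iv) is the strict convexity just recorded, and parts (ii)--(iii) are inherited from $y_0$ and $y_1$ through the mixture relation $F_{y_\lambda \mid x} = (1-\lambda) F_{y_0 \mid x} + \lambda F_{y_1 \mid x}$ of Lemma \ref{lemma:convex_combination_x}, since a finite mixture of compactly supported laws with $C^2$ densities again has compact support and a $C^2$ density. Hence $\beta_\lambda := \arg\min_{\beta \in B} E[\rho_\tau(y_\lambda - x^\prime \beta)]$ is a well-defined element of $B_{I,\tau}$, with $\beta_0 = \beta^0$ and $\beta_1 = \beta^1$.

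The core step is the continuity of $\lambda \mapsto \beta_\lambda$. Writing $G_\lambda(\beta) = (1-\lambda) G_0(\beta) + \lambda G_1(\beta)$ for the objective, this map is jointly continuous in $(\lambda,\beta)$, strictly convex in $\beta$ on the compact convex set $B$, and has a unique minimizer for each $\lambda$; Berge's maximum theorem then yields that the argmin $\beta_\lambda$ is continuous in $\lambda$. The stated hypothesis, that for every selection the risk attains its minimal value $0$ at some $\beta \in B$, provides an alternative smooth route: it places each $\beta_\lambda$ at an interior first-order point $\partial_\beta G_\lambda(\beta_\lambda)=0$, and Lemma \ref{lemma:smoothness} renders $G_\lambda$ twice continuously differentiable with positive-definite Hessian, so the implicit function theorem applied to this first-order condition gives continuity, indeed differentiability, of $\lambda \mapsto \beta_\lambda$. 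Either way, $\lambda \mapsto \beta_\lambda$ is a continuous path in $B_{I,\tau}$ from $\beta^0$ to $\beta^1$, establishing path-connectedness.

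I expect the main obstacle to be this continuity of the argmin map, and in particular confirming that the minimizers $\beta_\lambda$ stay interior uniformly in $\lambda$ so that the smooth first-order characterization of Lemma \ref{lemma:smoothness} applies; the hypothesis on the vanishing risk is precisely what secures this interiority, while Berge's theorem offers a more robust fallback. A secondary point requiring care is checking that $y_\lambda$ genuinely lies in $Sel^\ast(x,Y)$ rather than only in $Sel_C(x,Y)$, which the mixture representation of the conditional CDF handles. Once the path is in hand, connectedness of $B_{I,\tau}$ and the interval-valuedness of its coordinate projections follow at once.
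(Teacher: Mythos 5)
Your proposal is correct, and its skeleton is the paper's: construct the mixture selection $(x,y_\lambda)$ via Lemmas \ref{lemma:convex_combination_x} and \ref{lemma:convex}, verify $(x,y_\lambda) \in Sel^\ast(x,Y)$ by checking Condition \ref{condition:selection} parts (i)--(iv), and then establish continuity of $\lambda \mapsto \beta_\lambda$ to conclude path-connectedness. Where you genuinely diverge is the continuity step. The paper works with the first-order condition, defining $\Psi(\beta,\lambda) = (1-\lambda)\frac{\partial}{\partial\beta}E[\rho_\tau(y_0 - x^\prime\beta)] + \lambda\frac{\partial}{\partial\beta}E[\rho_\tau(y_1 - x^\prime\beta)]$, obtaining local solvability from the implicit function theorem (via Lemmas \ref{lemma:convex} and \ref{lemma:smoothness}), and then invoking Theorem 1 of \cite{Sandberg1981}, a \emph{global} implicit function theorem; the vanishing-risk hypothesis enters exactly there, to guarantee that $\Psi(\cdot,\lambda)$ has a zero in $B$ for every $\lambda$ in compact subsets of $(0,1)$. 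Your primary route via Berge's maximum theorem bypasses all of this machinery: joint continuity of $(\lambda,\beta)\mapsto G_\lambda(\beta)$, compactness of $B$, and uniqueness of the minimizer from strict convexity already give continuity of the constrained argmin, with no appeal to differentiability (Lemma \ref{lemma:smoothness}) and, notably, no use of the vanishing-risk hypothesis at all --- so your argument is more elementary and shows that hypothesis is needed only for the first-order-condition approach. Two caveats on your secondary (IFT) route, which is essentially the paper's: first, the hypothesis does not deliver ``interiority''; the correct reading is that a zero of the nonnegative risk is an \emph{unconstrained} global minimizer, so its gradient vanishes and the constrained and unconstrained minimizers coincide regardless of whether the point is interior to $B$. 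Second, strict convexity yields only a positive semidefinite Hessian, not the positive definite one you assert, so the nonsingularity needed for the implicit function theorem is not actually guaranteed --- a gap your Berge route avoids but which the paper's own proof shares.
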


\begin{proof}
Let $(x,y_0), (x,y_1) \in Sel^\ast(x,Y)$.
By Lemma \ref{lemma:convex_combination_x}, for any $\lambda \in [0,1]$, there exists $(x,y_\lambda) \in Sel_C(x,Y)$ such that $F_{y_\lambda \mid x} = (1-\lambda) \cdot F_{y_0 \mid x} + \lambda \cdot F_{y_1 \mid x}$.
Condition \ref{condition:selection} (ii) and (iii) are satisfied by such a selection $(x,y_\lambda)$ due to $F_{y_\lambda \mid x} = (1-\lambda) \cdot F_{y_0 \mid x} + \lambda \cdot F_{y_1 \mid x}$.
Furthermore, Lemma \ref{lemma:convex} shows that such a selection $(x,y_\lambda)$ also satisfies Condition \ref{condition:selection} (iv).
Therefore, $(x,y_\lambda) \in Sel^\ast(x,Y)$.

Let $U \supset B$ be an open subset of $\mathbb{R}^p$, $V=(0,1)$, and $W$ be an open subset of $\mathbb{R}^p$.
Define the function $\Psi: U \times (0,1) \rightarrow W$ by $\Psi(\beta,\lambda) = (1-\lambda) \frac{\partial}{\partial \beta} E[\rho_\tau(y_0-x^\prime\beta)] + \lambda \frac{\partial}{\partial \beta} E[\rho_\tau(y_1-x^\prime\beta)]$, which is guaranteed to exist by Lemma \ref{lemma:smoothness}.
Note also that $F_{y_\lambda \mid x} = (1-\lambda) \cdot F_{y_0 \mid x} + \lambda \cdot F_{y_1 \mid x}$, $(x,y_\lambda) \in Sel^\ast(x,Y)$ and Lemma \ref{lemma:smoothness} show that $\Psi(\beta,\lambda) = \frac{\partial}{\partial \beta} E[\rho_\tau(y_\lambda-x^\prime\beta)]$ for each $\lambda \in [0,1]$.

First, observe that for each $\lambda \in V$ there is exactly one $\beta_\tau(\lambda) \in U$ satisfying $\Psi(\beta_\tau(\lambda),\lambda) = \vec 0$ due to Lemma \ref{lemma:convex}.
Second, the local solvability (i.e., the existence of a continuous explicit function $\beta_\tau(\lambda)$ at each $\lambda \in V$) follows from the implicit function theorem with Lemmas \ref{lemma:convex} and \ref{lemma:smoothness}.
Third, for each compact subset of $K\subset V = (0,1)$, $\lambda \in K$ implies that $\Psi(\beta_\tau(\lambda),\lambda) = \vec 0$ holds for some $\beta_\tau(\lambda) \in B$ by the condition of the proposition.
Therefore, by Theorem 1 of \cite{Sandberg1981}, the map $\lambda \mapsto \beta_\tau(\lambda)$ from $V=(0,1)$ into $U$ is continuous.
Also, this continuity extends to the domain $[0,1]$ by the definition of $\Psi$ and Lemma \ref{lemma:smoothness}.

Therefore, $B_{I,\tau}$ is path-connected, and is therefore connected.
\end{proof}

\section{Estimation and Inference under Discrete Intervals}\label{sec:discrete}

This appendix section provides additional estimation and inference results that are relevant to a part of the procedure used in the empirical application in Section \ref{sec:empirical}.
Assume that $Y_{1},Y_{2},...$ are independently and identically distributed.
Let $Y_{i}=\left[ a_{i},b_{i}\right] $ and let $a_{\left( 1\right) }\leq a_{\left( 2\right) }\leq ...\leq a_{\left( n\right) }$ and $b_{\left( 1\right) }\leq b_{\left( 2\right) }\leq ...\leq b_{\left( n\right) }$ be the order statistics of $\left\{a_{i}\right\} _{i=1}^{n}$ and $\left\{ b_{i}\right\} _{i=1}^{n}$, respectively. 
For $t\in \mathbb{R}$, let $\lfloor t\rfloor $ denote the biggest integer smaller than $t$. Then, for $\tau \in \left( 0,1\right) $, define
\begin{eqnarray*}
\tilde{T}_{n}^{-1}\left( \tau \right) &=&a_{\left( \lfloor n\tau \rfloor \right) } 
\qquad\text{and}
\\
\tilde{C}_{n}^{-1}\left( \tau \right) &=&b_{\left( \lfloor n\tau \rfloor \right) }.
\end{eqnarray*}
to be our estimators of the lower bound and upper bound of $\Theta^Y_0(\tau)$, respectively.
Specifically, we define $\hat\Theta^Y_0(\tau) = [\tilde{T}_{n}^{-1}\left( \tau \right), \tilde{C}_{n}^{-1}\left( \tau \right)] = [a_{\left( \lfloor n\tau \rfloor \right) } , b_{\left( \lfloor n\tau \rfloor \right) }]$ as our estimator for $\Theta^Y_0(\tau)$.

In many surveys, the respondent is given a list of brackets to choose from. 
In this case both $a$ and $b$ are discrete random variables.
Assume $Y=\left[ a,b\right] $, and hence $\Theta^Y_0(\tau)=[q_{a}(\tau) ,q_{b}(\tau) ]$ by Theorems \ref{prop:id} and \ref{prop:id_convex}, where $q_{a}$ and $q_{b}$ vary discretely.
In the current appendix section, we define the set estimator by $\hat{\Theta}^Y(\tau)=[a_{\left( \lceil n\tau \rceil \right) },b_{\left( \lceil n\tau \rceil \right) }]$, where $\lceil t \rceil$ denotes the smallest integer greater than $t$. 

\subsection{Super-Consistency}\label{sec:superconsistency}

By Theorem 2 of \citet{RamachandramurtyRao1973}, we have
\begin{eqnarray*}
&& \mathbf{P}\left(r_n(a_{\left( \lceil n\tau \rceil \right) } - q_{a}(\tau)) \leq 1\right) \rightarrow 1,
\\
&& \mathbf{P}\left(r_n(a_{\left( \lceil n\tau \rceil \right) } - q_{a}(\tau)) \leq -1\right) \rightarrow 0,
\\
&& \mathbf{P}\left(r_n(b_{\left( \lceil n\tau \rceil \right) } - q_{b}(\tau)) \leq 1\right) \rightarrow 1,
\qquad\text{and}
\\
&& \mathbf{P}\left(r_n(b_{\left( \lceil n\tau \rceil \right) } - q_{b}(\tau)) \leq -1\right) \rightarrow 0,
\end{eqnarray*}
as $n \rightarrow \infty$, if $r_n \rightarrow \infty$ as $n \rightarrow \infty$.
Note that $r_n$ can diverge at an arbitrary rate as a function of $n$ -- even faster than $\sqrt{n}$.
Thus, we obtain the following result by the continuous mapping theorem.

\begin{theorem}\label{prop:discrete}
Suppose the random set takes the form $Y = [a,b]$ $\mathbf{P}$-a.s. where $a$ and $b$ are discretely distributed.
If $Y_1$, $Y_2,$ $\ldots$ are independently and identically distributed, then for $\tau \in (0,1)$
\begin{eqnarray*}
&&r_n H\left( \hat{\Theta}^Y(\tau),\Theta^Y_0(\tau)\right) \overset{P}{\rightarrow } 0
\qquad\text{and}
\\
&&r_n d_{H}\left( \hat{\Theta}^Y(\tau),\Theta^Y_0(\tau)\right) \overset{P}{\rightarrow 
} 0
\end{eqnarray*}
for $r_n \rightarrow \infty$ as $n \rightarrow \infty$, where $H$ and $d_H$ denote the Hausdorff distance and the directed Hausdorff distance, respectively.\footnote{
For two sets, $A$ and $B$, in a finite dimensional Euclidean space $\left(  \mathbb{R}^{k},\left\Vert {}\right\Vert \right) $, the directed Hausdorff distance from $A$ to $B$ is 
$
d_{H}\left( A,B\right) =\sup_{a\in A}\inf_{b\in B}\left\Vert a-b\right\Vert
$
and the Hausdorff distance between $A$ and $B$ is 
$
H\left( A,B\right) =\max \left\{ d_{H}\left( A,B\right) ,d_{H}\left(
B,A\right) \right\} .
$
}
\end{theorem}

Theorem \ref{prop:discrete} suggests that the estimator for the identification region when $Y$ is a discrete random set is super-consistent. 
Super-consistency is useful in cases where estimating a discrete quantile set is just a first step in a two-step estimation procedure. 
On the other hand, a drawback to this result is that we do not obtain a root-$n$ non-degenerate asymptotic normal distribution. 
The next subsection provides a modified estimator with a root-$n$ non-degenerate distribution.

\subsection{Non-Degenerate Asymptotic Distribution}

The lack of the ability to conduct inference with the naive quantile estimators is unfortunate.
However, in the special case where the discrete boundaries, $a$ and $b$, of the random set $Y$ are given as a count data, we can allow for inference even in the discrete case by using the idea of \citet{MachadoSilva2005}.
Suppose that $a$ and $b$ are supported in the set $\mathcal{J} = \{0,1,...,J-1\}$ of cardinality $J \in \mathbb{N}$.
Construct the random variables $\tilde a = a + u$ and $\tilde b = b + v$ where $u, v \sim \text{Uniform}(0,1)$ and $(u,v)$ is independent of $(a,b)$.
Let $F_{\tilde a,\tilde b}$ denote the joint cumulative distribution function of $(\tilde a,\tilde b)$, which is identified from the two-dimensional convolution of the distributions of $(a,b)$ and $(u,v)$.
As a result of the convolution, the distribution of $F_{\tilde a,\tilde b}$ is differentiable infinitely many times on $\left( \mathcal{J} \oplus (0,1) \right)^2$.
Furthermore, the above construction of the mixture distribution yields the marginal quantiles relations
\begin{align*}
q_{\tilde a}(\tau) &= q_a(\tau) + \frac{\tau - \sum_{j=0}^{q_a(\tau)-1} \text{Pr}(a=j)}{\text{Pr}(a=q_a(\tau))}
\\
q_{\tilde b}(\tau) &= q_b(\tau) + \frac{\tau - \sum_{j=0}^{q_b(\tau)-1} \text{Pr}(b=j)}{\text{Pr}(b=q_b(\tau))}
\end{align*}
See \citet{MachadoSilva2005}.
As such, we can define a new set estimator by
$$
\tilde \Theta^Y(\tau) = \left[\check a(\tau), \check b(\tau)\right] := \left[\tilde a_{\left( \lfloor n\tau \rfloor \right) } - \frac{\tau - \sum_{j=0}^{a_{\left( \lceil n\tau \rceil \right) }-1} \widehat{\text{Pr}}(a=j)}{\widehat{\text{Pr}}(a=a_{\left( \lceil n\tau \rceil \right) })}, \tilde b_{\left( \lfloor n\tau \rfloor \right) } - \frac{\tau - \sum_{j=0}^{b_{\left( \lceil n\tau \rceil \right) }-1} \widehat{\text{Pr}}(b=j)}{\widehat{\text{Pr}}(b=b_{\left( \lceil n\tau \rceil \right) })} \right]
$$
where $\widehat{\text{Pr}}(a=j)$ and $\widehat{\text{Pr}}(b=j)$ denote the empirical mass for each $j \in \mathcal{J}$.
Note that, in this estimator, we distinguish the $\sqrt{n}$-consistent estimator $(\tilde a_{\left( \lfloor n\tau \rfloor \right) }, \tilde b_{\left( \lfloor n\tau \rfloor \right) })'$ and the aforementioned super-consistent estimator $\sqrt{n}$-consistent estimator $( a_{\left( \lfloor n\tau \rfloor \right) },  b_{\left( \lfloor n\tau \rfloor \right) })'$ on purpose

To analyze the asymptotic distribution of this estimator $\tilde \Theta^Y(\tau)$, we first need the joint asymptotic distribution of the $2(J+1)$-dimensional vector $\sqrt{n}( \tilde a_{\left( \lfloor n\tau \rfloor \right) }-q_{\tilde a}(\tau), \tilde b_{\left( \lfloor n\tau \rfloor \right) }-q_{\tilde b}(\tau), \widehat{\text{Pr}}(a=0)-{\text{Pr}}(a=0),...,\widehat{\text{Pr}}(a=J-1)-{\text{Pr}}(a=J-1), \widehat{\text{Pr}}(b=0)-{\text{Pr}}(b=0),...,\widehat{\text{Pr}}(b=J-1)-{\text{Pr}}(b=J-1))'$ which consists stochastic element of the boundaries of the set estimator.
For any $\tau \in (0,1)$ such that $(q_{\tilde a}(\tau), q_{\tilde b}(\tau)) \in \left( \mathcal{J} \oplus (0,1) \right)^2$,
\begin{equation}
\sqrt{n}
\left(\begin{array}{c}
\left( \tilde a_{\left( \lfloor n\tau \rfloor \right) }-q_{\tilde a}(\tau), \tilde b_{\left( \lfloor n\tau \rfloor \right) }-q_{\tilde b}(\tau) \right)' 
\\
\left( \widehat{\text{Pr}}(a=0)-{\text{Pr}}(a=0),...,\widehat{\text{Pr}}(a=J-1)-{\text{Pr}}(a=J-1)\right)' 
\\
\left( \widehat{\text{Pr}}(b=0)-{\text{Pr}}(b=0),...,\widehat{\text{Pr}}(b=J-1)-{\text{Pr}}(b=J-1)\right)' 
\end{array}\right)
\overset{D}{\rightarrow}  N\left( \mathbf{0},\tilde \Sigma(\tau) \right),
\label{jointNormal_discrete}
\end{equation}
where $\tilde \Sigma(\tau)$ is a $2(J+1) \times 2(J+1)$ matrix which is completely expressed in (\ref{eq:sigma_discrete}) in Appendix \ref{sec:variance_matrix}.
If $\text{Pr}(a=q_a(\tau)) \neq 0$ and $\text{Pr}(b=q_b(\tau)) \neq 0$, then we therefore obtain
\begin{equation}\label{eq:z_check}
\sqrt{n} \left(\check a(\tau)-q_a(\tau), \check b(\tau) - q_b(\tau) \right)'
\overset{D}{\rightarrow}
\left(\check z_L(\tau), \check z_U(\tau) \right)'
\sim
N\left( \mathbf{0}, \Xi(\tau) \tilde \Sigma(\tau) \Xi(\tau)' \right),
\end{equation}
where $\Xi(\tau) = (\Xi_{1\cdot}',\Xi_{2\cdot}')'$ is a $2 \times 2(J+1)$ matrix, where the first row takes the form
\begin{equation*}
\Xi_{1\cdot} = \left( 1, \ \ 0, \ \ \left(\frac{1\{j \leq q_a(\tau)\}}{\text{Pr}(a=q_a(\tau))} + 1\{j=q_a(\tau)\} \frac{\tau - \sum_{j'=0}^{q_a(\tau)-1} {\text{Pr}}(a=j')}{{\text{Pr}}(a=q_a(\tau))^2}\right)_{j=0}^{J-1}, \ 0,...,0 \right)
\end{equation*}
and the second row takes the form
\begin{equation*}
\Xi_{2\cdot} = \left( 0, \ \ 1, \ \ 0,...,0, \ \  \left(\frac{1\{j \leq q_b(\tau)\}}{\text{Pr}(a=q_b(\tau))} + 1\{j=q_b(\tau)\} \frac{\tau - \sum_{j'=0}^{q_b(\tau)-1} {\text{Pr}}(b=j')}{{\text{Pr}}(b=q_b(\tau))^2}\right)_{j=0}^{J-1} \right)
\end{equation*}
From this asymptotic joint normal distribution, we obtain the following theorem that can be used for inference on random sets where the boundaries are counts.

\begin{theorem}
Suppose the random set takes the form $Y = [a,b]$ $\mathbf{P}$-a.s. where both $a$ and $b$ are discretely distributed with support contained in $\mathcal{J} = \{0,1,...,J-1\}$.
Construct the random variables $\tilde a = a + u$ and $\tilde b = b + v$ where $u, v \sim \text{Uniform}(0,1)$ and $(u,v)$ is independent of $(a,b)$.
For any $\tau \in (0,1)$ such that $(q_{\tilde a}(\tau), q_{\tilde b}(\tau)) \in \left( \mathcal{J} \oplus (0,1) \right)^2$, $\text{Pr}(a=q_a(\tau)) \neq 0$, and $\text{Pr}(b=q_b(\tau)) \neq 0$,
\begin{eqnarray}
&&
\sqrt{n}H\left( \check{\Theta}^Y(\tau),\Theta^Y_0(\tau)\right) \overset{D}{\rightarrow }%
\max \left\{ \left\vert \check z_{L}(\tau)\right\vert ,\left\vert \check z_{U}(\tau)\right\vert
\right\},  \label{Hasymptotics_discrete_check}
\\
&&
\sqrt{n}d_{H}\left( \check{\Theta}^Y(\tau),\Theta^Y_0(\tau)\right) \overset{D}{\rightarrow 
}\max \left\{ \left( \check z_{L}(\tau)\right) _{+},\left( \check z_{U}(\tau)\right) _{-}\right\},
\label{dHasymptotics_discrete_check}
\qquad\text{and}
\\
&&
n\left( d_{H}\left( \check{\Theta}^Y(\tau),\Theta^Y_0(\tau)\right) \right) ^{2}\overset{D}{%
\rightarrow }\max \left\{ \left( \check z_{L}(\tau)\right) _{+}^{2},\left( \check z_{U}(\tau)\right)
_{-}^{2}\right\},  \label{H2asymptotics_discrete_check}
\end{eqnarray}
where the random vector $\left( \check z_{L}(\tau),\check z_{U}(\tau)\right)$ is distributed according to (\ref{eq:z_check}).
\end{theorem}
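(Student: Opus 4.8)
The plan is to reduce the Hausdorff and directed Hausdorff distances between the two compact intervals $\check\Theta^Y(\tau) = [\check a(\tau),\check b(\tau)]$ and $\Theta^Y_0(\tau) = [q_a(\tau),q_b(\tau)]$ to explicit functions of their four endpoints, and then to feed the joint asymptotic normality in (\ref{eq:z_check}) through the continuous mapping theorem. This parallels the argument behind Theorem \ref{theorem:continuous_distribution}, with the underlying joint central limit theorem (\ref{jointNormal}) there replaced by (\ref{eq:z_check}) here; recall also that $\Theta^Y_0(\tau) = [q_a(\tau),q_b(\tau)]$ by Theorems \ref{prop:id} and \ref{prop:id_convex}.

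First I would record the closed-form interval-distance identities. For any two compact intervals $[\alpha_1,\beta_1]$ and $[\alpha_2,\beta_2]$ the distance from a point to an interval is convex in the point, so its supremum over an interval is attained at an endpoint; this yields
\begin{equation*}
H\left([\alpha_1,\beta_1],[\alpha_2,\beta_2]\right) = \max\{|\alpha_1 - \alpha_2|, |\beta_1 - \beta_2|\}
\end{equation*}
and, when the two intervals overlap,
\begin{equation*}
d_H\left([\alpha_1,\beta_1],[\alpha_2,\beta_2]\right) = \max\{(\alpha_2 - \alpha_1)_+, (\beta_1 - \beta_2)_+\}.
\end{equation*}
Setting $(\alpha_1,\beta_1) = (\check a(\tau),\check b(\tau))$ and $(\alpha_2,\beta_2) = (q_a(\tau),q_b(\tau))$ then expresses $\sqrt{n}H$ and $\sqrt{n}d_H$ as these max functionals evaluated at the scaled errors $\sqrt{n}(\check a(\tau) - q_a(\tau))$ and $\sqrt{n}(\check b(\tau) - q_b(\tau))$.

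Next I would apply the continuous mapping theorem. Since $(x,y) \mapsto \max\{|x|,|y|\}$ and $(x,y) \mapsto \max\{(x)_-, (y)_+\}$ are continuous, the convergence in (\ref{eq:z_check}) gives $\sqrt{n}H \overset{D}{\rightarrow} \max\{|\check z_L(\tau)|, |\check z_U(\tau)|\}$ and $\sqrt{n}d_H \overset{D}{\rightarrow} \max\{(\check z_L(\tau))_-, (\check z_U(\tau))_+\}$, and squaring the latter gives the limit $\max\{(\check z_L(\tau))_-^2, (\check z_U(\tau))_+^2\}$ for $n\,d_H^2$. To match the forms stated in (\ref{dHasymptotics_discrete_check}) and (\ref{H2asymptotics_discrete_check}), I would invoke the symmetry of the mean-zero Gaussian limit: since $(\check z_L(\tau),\check z_U(\tau)) \overset{D}{=} (-\check z_L(\tau),-\check z_U(\tau))$ and $(-x)_- = (x)_+$, the limit $\max\{(\check z_L(\tau))_-, (\check z_U(\tau))_+\}$ is equal in distribution to $\max\{(\check z_L(\tau))_+, (\check z_U(\tau))_-\}$, and similarly for the squared version.

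The main obstacle is the bookkeeping needed to justify the clean endpoint formula for $d_H$, that is, to discard the ``crossing'' terms $(\check a(\tau) - q_b(\tau))_+$ and $(q_a(\tau) - \check b(\tau))_+$. I would argue that (\ref{eq:z_check}) implies consistency, $\check a(\tau) \overset{P}{\rightarrow} q_a(\tau)$ and $\check b(\tau) \overset{P}{\rightarrow} q_b(\tau)$ with $q_a(\tau) \le q_b(\tau)$, so that with probability approaching one the two intervals overlap and these crossing terms are identically zero; they can then be absorbed into an $o_P(1)$ remainder that does not affect the limit. Everything else is a routine application of the continuous mapping theorem to the already-established joint normal limit, so no further distributional input is required.
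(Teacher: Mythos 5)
Your proof is correct and takes essentially the same approach as the paper: the paper states this theorem as an immediate consequence of the joint asymptotic normality in (\ref{eq:z_check}), obtained by the interval-endpoint representation of the Hausdorff and directed Hausdorff distances plus the continuous mapping theorem, i.e., exactly the machinery of \cite{BeresteanuMolinari08} that it invokes for Theorem \ref{theorem:continuous_distribution}. Your explicit handling of the overlap (crossing) terms and of the sign symmetry of the mean-zero Gaussian limit simply spells out details that the paper leaves to the cited reference.
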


\subsection{The variance matrix $\tilde\Sigma(\tau)$}\label{sec:variance_matrix}

In this section, we give a complete expression for the $2(J+1) \times 2(J+1)$ matrix, which is a component of the asymptotic normal distribution for the $2(J+1)$-dimensional vector $\sqrt{n}( \tilde a_{\left( \lfloor n\tau \rfloor \right) }-q_{\tilde a}(\tau), \tilde b_{\left( \lfloor n\tau \rfloor \right) }-q_{\tilde b}(\tau), \widehat{\text{Pr}}(a=0)-{\text{Pr}}(a=0),...,\widehat{\text{Pr}}(a=J-1)-{\text{Pr}}(a=J-1), \widehat{\text{Pr}}(b=0)-{\text{Pr}}(b=0),...,\widehat{\text{Pr}}(b=J-1)-{\text{Pr}}(b=J-1))'$.
See (\ref{jointNormal_discrete}).

$\tilde \Sigma(\tau)$ is a $2(J+1) \times 2(J+1)$ matrix
\begin{equation}\label{eq:sigma_discrete}
\tilde \Sigma(\tau) = 
\left(\begin{array}{cc}
\Sigma_{\tilde a,\tilde b}(\tau) & 
\begin{array}{cc}
\Sigma_{\tilde a, p_a}(\tau)' & \Sigma_{\tilde a, p_b}(\tau)'\\
\Sigma_{\tilde b, p_a}(\tau)' & \Sigma_{\tilde b, p_b}(\tau)'
\end{array}
\\
\begin{array}{cc}
\Sigma_{\tilde a, p_a}(\tau) & \Sigma_{\tilde b, p_a}(\tau)\\
\Sigma_{\tilde a, p_b}(\tau) & \Sigma_{\tilde b, p_b}(\tau)
\end{array}
&
\begin{array}{cc}
\Sigma_{p_a}(\tau) & \Sigma_{p_a, p_b}(\tau)\\
\Sigma_{p_a, p_b}(\tau)' & \Sigma_{p_b}(\tau)
\end{array}
\end{array}\right),
\end{equation}
$\Sigma_{\tilde a,\tilde b}(\tau)$ is a $2 \times 2$ matrix of the form
\begin{equation*}
\Sigma_{\tilde a,\tilde b}(\tau) =\left( 
\begin{array}{cc}
\frac{\tau \left( 1-\tau \right) }{f_{\tilde a}\left( q_{\tilde a}(\tau) \right) ^{2}} & \frac{F_{\tilde a,\tilde b}\left( q_{\tilde a}(\tau) ,q_{\tilde b}(\tau) \right) -\tau
^{2}}{f_{\tilde a}\left( q_{\tilde a}(\tau) \right) f_{\tilde b}\left(q_{\tilde b}(\tau) \right) } 
\\ 
\frac{F_{\tilde a, \tilde b}\left( q_{\tilde a}(\tau) ,q_{\tilde b}(\tau) \right) -\tau ^{2}}{f_{\tilde a}\left(q_{\tilde a}(\tau) \right) f_{\tilde b}\left( q_{\tilde b}(\tau) \right) } & \frac{\tau \left( 1-\tau \right) }{f_{\tilde b}\left(q_{\tilde b}(\tau) \right) ^{2}},
\end{array}
\right).
\end{equation*}
$\Sigma_{\tilde a, p_a}(\tau)$ and $\Sigma_{\tilde b, p_b}(\tau)$ is a $J \times 1$ are $J \times 1$ matrices of the forms
\begin{equation*}
\Sigma_{\tilde a, p_a}(\tau) = \left[ \frac{\left(1\left\{a \leq q_a(\tau)\right\} - \tau\right) \text{Pr}(a=j)}{f_{\tilde a}(q_{\tilde a}(\tau))} \right]_{j=0}^{J-1}
\text{ and }
\Sigma_{\tilde b, p_b}(\tau) = \left[ \frac{\left(1\left\{b \leq q_b(\tau)\right\} - \tau\right) \text{Pr}(b=j)}{f_{\tilde b}(q_{\tilde b}(\tau))} \right]_{j=0}^{J-1},
\end{equation*}
$\Sigma_{\tilde a, p_b}(\tau)$ and $\Sigma_{\tilde b, p_a}(\tau)$ are a $K \times 1$ matrices of the forms
\begin{align*}
\Sigma_{\tilde a, p_b}(\tau) &= \left[ \frac{\sum_{j'=0}^{q_a(\tau)} \text{Pr}(a=j',b=j) - \tau \text{Pr}(b=j)}{f_{\tilde a}(q_{\tilde a}(\tau))} \right]_{j=0}^{J-1}
\text{ and }\\
\Sigma_{\tilde b, p_a}(\tau) &= \left[ \frac{\sum_{j'=0}^{q_b(\tau)} \text{Pr}(a=j,b=j') - \tau \text{Pr}(a=j)}{f_{\tilde b}(q_{\tilde b}(\tau))} \right]_{j=0}^{J-1},
\end{align*}
$\Sigma_{p_a}$ and $\Sigma_{p_b}$ are $J \times J$ matrices of the forms
\begin{align*}
\Sigma_{p_a} &= \left[ 1\{j=j'\} \text{Pr}(a=j) - \text{Pr}(a=j)\text{Pr}(a=j') \right]_{j=0,j'=0}^{J-1, J-1}
\text{ and }\\
\Sigma_{p_a} &= \left[ 1\{j=j'\} \text{Pr}(b=j) - \text{Pr}(b=j)\text{Pr}(b=j') \right]_{j=0,j'=0}^{J-1, J-1},
\end{align*}
and $\Sigma_{p_a, p_b}$ is a $J \times J$ matrix of the form
\begin{align*}
\Sigma_{p_a, p_b} = \left[ \text{Pr}(a=j, b=j') - \text{Pr}(a=j)\text{Pr}(b=j') \right]_{j=0,j'=0}^{J-1, J-1}.
\end{align*}

%\bibliographystyle{econometrica}
%\bibliography{Quantile}

\begin{thebibliography}{9}
\bibitem[Andrews and Shi(2013)]{AndrewsShi2013}
Andrews, D. W., and X. Shi (2013): ``Inference Based on Conditional Moment Inequalities,'' Econometrica, 81(2), 609--666.

\bibitem[Angrist, Chernozhukov, and Fern\'andez-Val(2006)]{AngristChernozhukovFernandezval2006}
Angrist, J., V. Chernozhukov, and I. Fern\'andez-Val (2006): ``Quantile Regression under Misspecification, with an Application to the U.S. Wage Structure,'' Econometrica, 74(2), 539--563.

%\bibitem[Babu and Rao(1988)]{BabuRao1988}
%Babu, J. G., and R. C. Rao (1988): ``Joint Asymptotic Distribution of Marginal Quantiles and Quintile Functions in Samples from a Multivariate Population,'' Journal of Multivariate Analysis, 27(1), 15--23.

\bibitem[Beresteanu, Molchanov, and Molinari(2011)]{BeresteanuMolchanovMolinari11}
Beresteanu, A., I. Molchanov, and F. Molinari (2011): ``Sharp Identification Regions in Models with Convex Moment Predictions,'' mimeo.

\bibitem[Beresteanu, Molchanov, and Molinari(2012)]{BeresteanuMolchanovMolinari2012}
Beresteanu, A., I. Molchanov, and F. Molinari (2012): ``Partial Identification Using Random Set Theory,'' Journal of Econometrics, 166(1), 17--32.

\bibitem[Beresteanu and Molinari(2008)]{BeresteanuMolinari08}
Beresteanu, A., and F. Molinari (2008): ``Asymptotic Properties for a Class of Partially Identified Models,'' Econometrica, 76(4), 763--814.


\bibitem[Belloni, Bugni, and Chernozhukov(2018)]{BelloniBugniChernozhukov2018}
Belloni, A., F. Bugni, and V. Chernozhukov (2018) ``Subvector Inference in Partially Identified Models
with Many Moment Inequalities,'' mimeo.

\bibitem[Bugni and Shi(2018)]{BugniShi2018}
Bugni, F., and X. Shi (2018) ``Inference for Functions of Partially Identified Parameters in Conditional Moment Inequality Models,'' mimeo.

\bibitem[Cameron and Huppert(1989)]{CameronHuppert88}
Cameron, T. A., and D. D. Huppert (1989): ``OLS Versus ML Estimation of Non-Market Resource Values with Payment Card Interval Data,'' Journal of Environmental Economics and
Management, 17(3), 230--246.

%\bibitem[Carlier, Chernozhukov, and Galichon(2016)]{CarlierChernozhukovGalichon2016}
%Carlier, G., V. Chernozhukov, and A. Galichon (2016): ``Vector Quantile Regression: an Optimal Transport Approach,'' Annals of Statistics, 44(3), 1165--1192.

\bibitem[Chaudhuri(1991)]{Chaudhuri1991}
Chaudhuri, P. (1991): ``Nonparametric Estimates of Regression Quantiles and Their Local Bahadur Representation,'' Annals of Statistics, 19(2), 760--777.

%\bibitem[Chernozhukov, Galichon, Hallin, and Henry(2017)]{ChernozhukovGalichonHallinHenry2017}
%Chernozhukov, V., A. Galichon, M. Hallin, and M. Henry (2017): ``Monge-Kantorovich Depth, Quantiles, Ranks and Signs,'' Annals of Statistics, 45(1), 223--256.

\bibitem[Chesher(2005)]{Chesher2005}
Chesher, A. (2005): ``Nonparametric Identification under Discrete Variation,'' Econometrica, 73(5), 1525--1550.

\bibitem[Chesher(2010)]{Chesher2010}
Chesher, A. (2010): ``Instrumental Variable Models for Discrete Outcomes,'' Econometrica, 78(2), 575--601.

\bibitem[Chesher and Rosen(2015)]{ChesherRosen2015}
Chesher, A., and A. M. Rosen (2015): ``Characterizations of Identified Sets delivered by Structural Econometric Models,'' CeMMAP Working Paper CWP63/15.

\bibitem[Gamper-Rabindran and Timmins(2013)]{GamperTimmins13}
Gamper-Rabindran, S., and C. Timmins (2013): ``Does Cleanup of Hazardous Waste Sites Raise Housing Values? Evidence of Spatially Localized Benefits,'' Journal of Environmental
Economics and Management, 65(3), 345--360.

\bibitem[Guerre and Sabbah(2012)]{guerre_sabbah2012}
Guerre, E. and C. Sabbah (2012):
``Uniform Bias Study and Bahadur Representation for Local polynomial Estimators of the Conditional Quantile Function,''
{Econometric Theory,} 26 (5), 1529--1564.

%\bibitem[Hallin, Paindaveine, and \v{S}iman(2010)]{HallinPaindaveineSiman2010}
%Hallin, M., D. Paindaveine, and M. \v{S}iman (2010): ``Multivariate Quantiles and Multiple-Output Regression Quantiles: From : $L_1$ Optimization to Halfspace Depth,'' Annals of Statistics, 38(2), 635--669.

\bibitem[Hong and Tamer(2003)]{HongTamer2003}
Hong, H., and E. Tamer (2003): ``Inference in Censored Models with Endogenous Regressors,'' Econometrica, 71(3), 905--932.

\bibitem[Kaido, Molinari and Stoye(2016)]{KaidoMolinariStoye2016}
Kaido, H., F. Molinari, and J. Stoye (2016): ``Confidence Intervals for Projections of Partially Identified Parameters,'' arXiv:1601.00934.

\bibitem[Kato and Sasaki(2017)]{KatoSasaki2017}
Kato, R., and Y. Sasaki (2017): ``On Using Linear Quantile Regressions for Causal Inference,'' Econometric Theory, 33(3), 664--690.

%\bibitem[Khan, Ponomareva, and Tamer (2011)]{KhanPonomarevaTamer2011}
%Khan, S., M. Ponomareva, and E. Tamer (2011): ``Sharpness in Randomly Censored Linear Models,'' Economics Letters, 113(1), 23--25.

\bibitem[Khan and Tamer (2009)]{KhanTamer2009}
Khan, S., and E. Tamer (2009): ``Inference on Endogenously Censored Regression Models Using Conditional Moment Inequalities,'' Journal of Econometrics, 152(2), 104--119.

\bibitem[Koenker(2005)]{Koenker05}
Koenker, R. (2005): Quantile Regression, Vol. 38 of Econometric Society Monographs. Cambridge University Press.

\bibitem[Koenker and Bassett(1978)]{KoenkerBasset78}
Koenker, R., and G. Bassett (1978): ``Regression Quantiles,'' Econometrica, 46(1), 33--50.

%\bibitem[Kordas(2006)]{Kordas2006}
%Kordas, G. (2006): ``Smoothed Binary Regression Quantiles,'' Journal of Applied Econometrics, 21(3), 387--407.

\bibitem[Li and Oka(2015)]{LiOka2015}
Li, T., and T. Oka (2015): ``Set Identification of the Censored Quantile Regression Model for Short Panels with Fixed Effects,'' Journal of Econometrics, 188(2), 363--377.

\bibitem[Machado and Silva(2005)]{MachadoSilva2005}
Machado, J. A., and J. S. Silva (2005): ``Quantiles for Counts,'' Journal of the American Statistical Association, 100(472), 1226--1237.

\bibitem[Manski(1985)]{Manski1985}
Manski, C. F. (1985): ``Semiparametric Analysis of Discrete Response: Asymptotic Properties of the Maximum Score Estimator,'' Journal of Econometrics, 27(3), 313--333.

%\bibitem[Manski(2003)]{Manski03}
%Manski, C. F. (2003): Partial Identification of Probability Distributions. Springer Verlag, New York.

\bibitem[Manski and Tamer(2002)]{ManskiTamer02}
Manski, C. F., and E. Tamer (2002): ``Inference on Regressions with Interval Data on a Regressor or Outcome,'' Econometrica, 70(2), 519--546.

\bibitem[Molchanov(2005)]{Molchanov05}
Molchanov, I. (2005): Theory of Random Sets. Springer Verlag, London.

\bibitem[Molchanov(1990)]{Molchanov1990}
Molchanov, I. S. (1990): ``Empirical Estimation of Distribution Quantiles of Random Sets,'' Theory of Probability and its Applications, 35(3), 594--600.

\bibitem[O'Garra and Mourato(2007)]{OGarraMourato2007}
O'Garra, T., and S. Mourato (2007): ``Public Preferences for Hydrogen Buses: Comparing Interval Data, OLS and Quantile Regression Approaches,'' Environmental and Resource Economics, 36(4), 389--411.

\bibitem[Pfeiffer(1990)]{Pfeiffer90}
Pfeiffer, P. E. (1990): Probability for Applications. Springer.

\bibitem[Powell(1984)]{Powell1984}
Powell, J. (1984): ``Least Absolute Deviations Estimation for the Censored Regression Model,'' Journal of Econometrics, 53(3), 303--325.

\bibitem[Qu and Yoon(2015)]{qu_yoon2015}
Qu, Z. and J. Yoon (2015):
``Nonparametric Estimation and Inference on Conditional Quantile Processes,''
{Journal of Econometrics,} 185 (1), 1--19.

\bibitem[Qu and Yoon(2018)]{qu_yoon2018}
Qu, Z. and J. Yoon (2018):
``Uniform Inference on Quantile Effects under Sharp Regression Discontinuity Designs,''
{Journal of Business and Economic Statistics,} forthcoming.


\bibitem[Ramachandramurty and Rao(1973)]{RamachandramurtyRao1973}
Ramachandramurty, P., and M. S. Rao (1973): ``Some Comments on Quantiles and Order Statistics,'' Canadian Mathematical Bulletin, 16(2), 289--293.

\bibitem[Rosen(2012)]{Rosen2012}
Rosen, A. M. (2012): ``Set identification via Quantile Restrictions in Short Panels,'' Journal of Econometrics, 166(1), 127--137.

\bibitem[Sandberg(1981)]{Sandberg1981}
Sandberg, I. W. (1981): ``Global Implicit Function Theorems,'' IEEE Transactions on Circuits and Systems, 28(2), 145--149.

%\bibitem[van der Vaart(1998)]{{Vaart98}}
%van der Vaart, A. W. (1998): Asymptotic Statistics. Cambridge University Press.

%\bibitem[Wang and Wang(2009)]{WangWang2009}
%Wang, H.J. and L. Wang (2009):
%``Locally Weighted Censored Quantile Regression,'' Journal of the American Statistical Association, 104 (487), 1117--1128.
\end{thebibliography}

%\section*{Figure and Table}
\clearpage
\thispagestyle{empty}
\begin{figure}[htb]
	\centering
		(I) $\tau = 0.25$\\
		\includegraphics[width=0.60\textwidth]{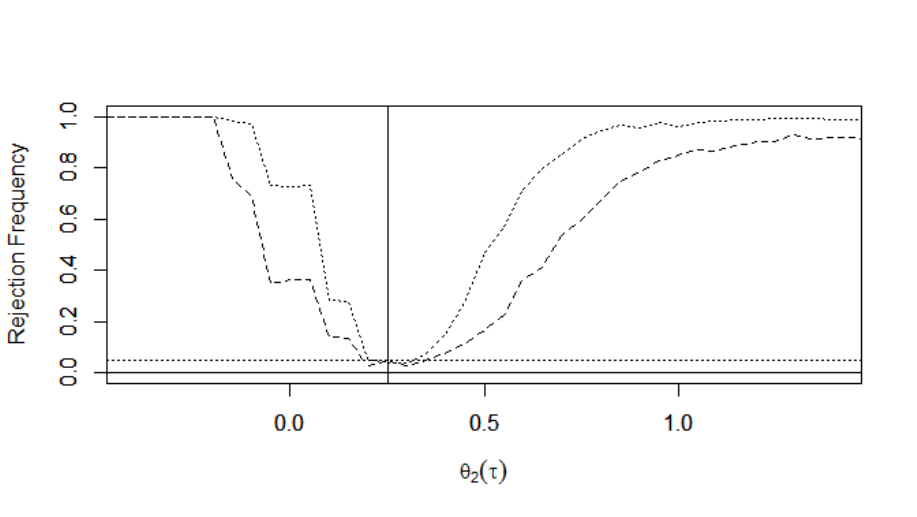}\\
		(II) $\tau = 0.50$\\
		\includegraphics[width=0.60\textwidth]{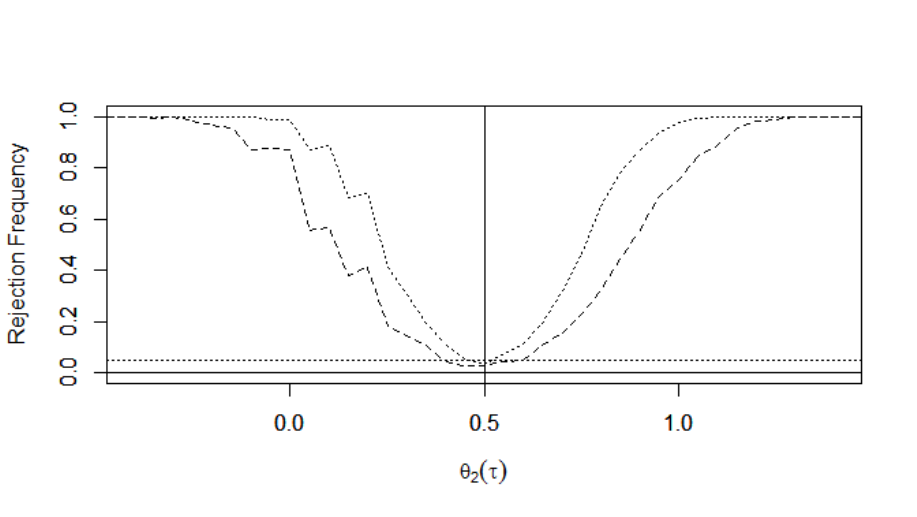}\\
		(III) $\tau = 0.75$\\
		\includegraphics[width=0.60\textwidth]{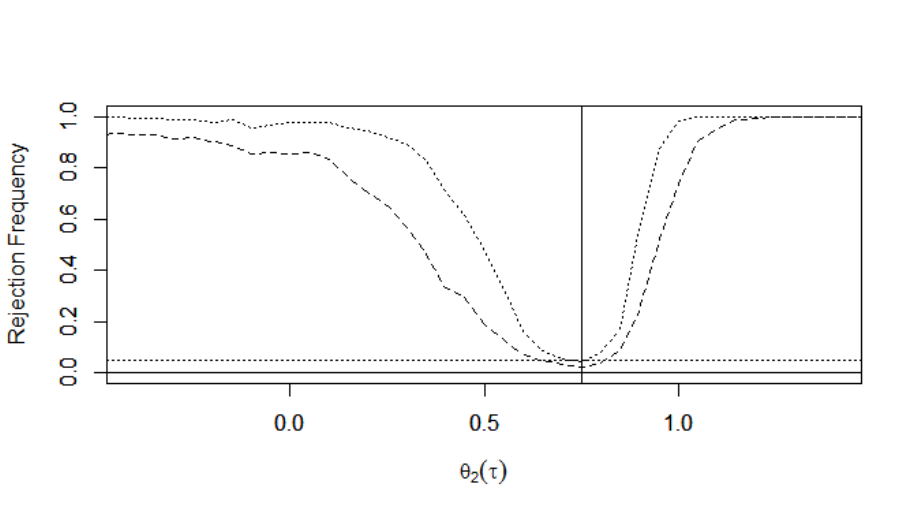}
	\caption{Rejection frequencies for inference of the parametric regression model (\ref{eq:mc_parametric}). The horizontal axis measures $\theta_2(\tau)$ given $\theta_1(\tau)$ fixed at (I) 1.25, (II) 1.50, and (III) 1.75. The dashed and dotted curves indicate the sample sizes of $n=100$ and 200, respectively.}
	\label{fig:mc_parametic}
\end{figure}

\newpage
\clearpage
\thispagestyle{empty}
\begin{table}
	\centering
		\begin{tabular}{rcccrrrr}
		\hline\hline
		         & (I) & (II) & (III) & \multicolumn{2}{c}{(IV)} & \multicolumn{2}{c}{(V)} \\
			$\tau$ & GT (2013) & $y=a$ & $y=b$ & \multicolumn{2}{c}{Set Estimate} & \multicolumn{2}{c}{95\% CR} \\
		\hline
		  0.10   & 0.244 (0.086) & -0.709 (0.196) & 1.260 (0.196) & [-10.780 & 11.331] & [-11.877 & 12.473]\\
			0.20   & 0.214 (0.085) & -0.506 (0.123) & 0.945 (0.123) & [-5.390  & 5.829]  & [-6.185  & 6.626]\\
			0.30   & 0.211 (0.084) & -0.201 (0.076) & 0.868 (0.076) & [-3.093  & 3.760]  & [-3.727  & 4.408]\\
			0.40   & 0.210 (0.083) & -0.019 (0.040) & 0.534 (0.040) & [-2.047  & 2.563]  & [-2.421  & 3.013]\\
			0.50   & 0.208 (0.083) & -0.013 (0.038) & 0.484 (0.038) & [-1.635  & 2.106]  & [-2.052  & 2.541]\\
			0.60   & 0.205 (0.084) &  0.046 (0.017) & 0.402 (0.017) & [-1.366  & 1.814]  & [-1.637  & 2.127]\\
			0.70   & 0.198 (0.085) &  0.062 (0.018) & 0.343 (0.018) & [-1.271  & 1.677]  & [-1.496  & 1.970]\\
			0.80   & 0.187 (0.081) &  0.057 (0.017) & 0.180 (0.017) & [-1.212  & 1.449]  & [-1.360  & 1.635]\\
			0.90   & 0.187 (0.077) &  0.054 (0.018) & 0.072 (0.018) & [-1.214  & 1.341]  & [-1.329  & 1.461]\\
		\hline\hline
		\end{tabular}
	\caption{Estimates, set estimates, and confidence regions for the coefficient of cleanup of hazardous waste sites on log house prices. Numbers in parentheses indicate estimated standard errors.}
		\label{tab:empirical_results}
\end{table}

\end{document}